\newtheorem{theorem}{Theorem}[section]
\newtheorem{cor}[theorem]{Corollary}
\begin{document}

\title{Estimating Copula and Test of Independence based on a generalized framework of all rank-based Statistics in Bivariate Sample}

\author{Abhik Ghosh    :  Roll No- MB1001\\
			 Aritra Chakravorty   : Roll No- MB1005\\
			 M.Stat. $1^{st}$ Year\\
			 Indian Statistical Institute, Kolkata}

\date{2nd June, 2011}
\maketitle

\vspace{0.5cm}

\tableofcontents\newpage

\vspace{1cm} 

\section{Introduction}
\paragraph{}
Copulas are mathematical objects that fully capture the dependence structure among random variables and hence, offer a great flexibility in building multivariate stochastic models. Since their introduction in the early 50s, copulas have gained a lot of popularity in several fields of applied mathematics, like finance, insurance and reliability theory. Nowadays, they represent a well-recognized tool for market and credit models, aggregation of risks, portfolio selection, etc.
\paragraph{}
In statistics, a copula is used as a general way of formulating a multivariate distribution in such a way that various general types of dependence can be represented. The approach to formulating a multivariate distribution using a copula is based on the idea that a simple transformation can be made of each marginal variable in such a way that each transformed marginal variable has a uniform distribution. Once this is done, the dependence structure can be expressed as a multivariate distribution on the obtained uniforms, and a copula is precisely a multivariate distribution on marginally uniform random variables. When applied in a practical context, the above transformations might be fitted as an initial step for each marginal distribution, or the parameters of the transformations might be fitted jointly with those of the copula. 
\paragraph{}
In case of bivariate sample, the notion of estimating copula is closely related to that of testing independence in a bivariate sample, as when the components of the bivariate sample are independent the copula becomes simply product of two uniform distributions. So apart from non-parametric estimation of copulas we also considered it relevant to introduce some non-parametric tests to better understand the very essence of copula in the explanation of association between the components. In fact we will develop a general multivariate statistics that gives rise to a much larger class of non-parametric rank based statistics. This class of statistics can be used in estimation and testing for the association present in the bivariate sample. We choose some representative statistics from that class and compared their power in testing independence using simulation as an attempt to choose the best candidate in that class. 
\paragraph{}
In section $\#2$ we introduce the general notion of copula, it's definition, properties and a brief remark on so called Skalr's Theorem which establishes the fact that for a given multivariate distribution copula is well defined. We discuss two methods of estimating copula in section $\#3$. In $\#3.1$ we discuss about the very intuitive empirical copula density. A natural nonparametric function that captures the dependence between two random variables is the copula, which contains all of the information which couples the two marginal distributions together to give the joint distribution of X and Y is stated in $\#3.2$.Section $\#4$ shows how copula can be used to measure the association between two elements of the bivariate sample. In section $\#5$ we introduce the concept of Rank Permutation Matrix (RPM) 
and Rank Position Vector (RPV), it's definition, properties and surprising relationship with general rank based statistics, which forms the central idea of this report. In section $\#6$ we choose some natural rank based statistics from most intuitive aspects (e.g. trace of RPM[$\#6.1$], moments of RPV[$\#6.2$],discrete fft[$\#6.3$] and mfcc[$\#6.4$] transformations of RPV)to test for independence in bivariate sample, their performance as a test statistics based on increasing sample sizes and critical values which consists of the greater bulk of this report. Again section $\#7$ is used to compare the relative performances of these statistics.  Finally section $\#8$ concludes with a discussion.
 
\vspace{1cm} 
 
\section{Copula and some properties}
\paragraph{}
The copula function provides a means to examine the dependence structure between two random variables. As defined by Schweizer and Sklar (1983) the copula, $C$, of $X$ and $Y$ is found by making marginal probability integral transforms on $X$ and $Y$ so that $C(y_1, y_2) = H\{F^{-1}(y_1), G^{-1}(y_2)\}$, $y_i \in [0, 1]~ (i = 1, 2)$, where $F$, $G$ and $H$ are the marginal and joint distribution functions of $X$, $Y$ and $(X, Y)$, respectively, and $F^{-1}$ and $G^{-1}$ are the right-continuous inverses of $F$ and $G$. Note that $C(.,.)$ is itself a bivariate distribution function on the unit square with uniform margins. We denote the corresponding probability measure by $\mu_c$. Under independence the copula is $C_I(y_1, y_2) = y_1y_2$, and any copula must fall between $max (y_1 + y_2 - 1, 0)$ and $min (y_l, y_2)$, the copulas of the upper and lower Frechet bounds. 
\paragraph{}
That the copula captures the basic dependence structure between $X$ and $Y$ is seen by the fact that all nonparametric measures of association, such as Kendall's $\tau$, Spearman's $\rho$, etc., are normed distances of the copula of $X$ and $Y$ from the independence copula. For additional discussion of the copula and its properties see Johnson (1987), Genest \& MacKay (1986) and Genest (1987).

\subsection{The Basic Idea}
\paragraph{}
Consider two random variables $X$ and $Y$, with continuous cumulative distribution functions $F_X$ and $F_Y$. The probability integral transform can be applied separately to the two random variables to define $U = F_X(X)$ and $V = F_Y(Y)$. It follows that $U$ and $V$ both have uniform distributions but are, in general, dependent if $X$ and $Y$ were already dependent (of course, if $X$ and $Y$ were independent, $U$ and $V$ remain independent). Since the transforms are invertible, specifying the dependence between $X$ and $Y$ is, in a way, the same as specifying dependence between $U$ and $V$. With $U$ and $V$ being uniform random variables, the problem reduces to specifying a bivariate distribution between two uniforms, that is, a copula. So the idea is to simplify the problem by removing consideration of many different marginal distributions by transforming the marginal variates to uniforms, and then specifying dependence as a multivariate distribution on the uniforms.

\subsection{Definition}
\paragraph{}
A copula is a multivariate joint distribution defined on the $n$-dimensional unit cube $[0, 1]^n$ such that every marginal distribution is uniform on the interval $[0, 1]$.
Specifically, $C:[0,1]^n \longmapsto [0,1]$ is an $n$-dimensional copula (briefly, $n$-copula) if: \\
\begin{itemize}
\item
$C(u)=0$ whenever $u\in [0,1]^n$ has at least one component equal to $0$;
\item
$C(u) = u_i$ whenever $u\in [0,1]^n$ has all the components equal to $1$ except the $i$th one, which is equal to $u_i$;
\item
$C$ is $n$-increasing, i.e., for each hyper-rectangle 
  \begin{equation}
  B = \times^n_{i=1}[x_i,y_i]\subseteq [0,1]^n;
  \end{equation}
  \begin{equation}
  V_C(B) = \sum_{z\in \times^n_{i=1}\{x_i,y_i\}}(-1)^{N(z)}C(z)\geq 0;
  \end{equation}
  where the $N(z) = card\{k|z_k=x_k\}$.$V_C(B)$ is the so called $C$-volume of $B$.
  \end{itemize}
  
\subsection{Sklar's theorem}
\paragraph{}  
The theorem proposed by Abe Sklar in 1959 underlies most applications of the copula.  The following is the general form of the Sklar's theorem: 
\begin{theorem}
	Given a joint distribution function $H$ for $p$ variables, and respective marginal distribution functions, there exists a copula $C$ such that the copula binds the margins to give the joint distribution.
\end{theorem}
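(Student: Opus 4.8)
The plan is to establish the binding identity $H(x_1,\dots,x_p)=C\big(F_1(x_1),\dots,F_p(x_p)\big)$, where $F_1,\dots,F_p$ denote the one-dimensional marginals of $H$, in two stages: first the case in which all marginals are continuous, which amounts to a change of variables, and then the general case, which needs an extension argument. Throughout, ``copula'' means an object satisfying the three defining properties listed above.

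\paragraph{}
\textbf{Continuous marginals.} First I would introduce the generalized inverses $F_i^{-1}(u)=\inf\{x:F_i(x)\ge u\}$, $u\in(0,1)$, and simply \emph{define} $C(u_1,\dots,u_p)=H\big(F_1^{-1}(u_1),\dots,F_p^{-1}(u_p)\big)$. Then I would verify the three copula axioms: grounding, because $H$ vanishes as soon as one argument is driven to $-\infty$, which is the relevant limiting value of $F_i^{-1}$ at $0$; the uniform-margin condition, because $F_i\big(F_i^{-1}(u)\big)=u$ for continuous $F_i$ while $H$ reduces to $F_i$ when the other arguments go to $+\infty$; and the $p$-increasing property, because each $F_i^{-1}$ is nondecreasing, so any hyper-rectangle in $u$-coordinates is the image of one in $x$-coordinates and its $C$-volume equals the $H$-volume of the latter, which is nonnegative. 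The identity then follows from the probability integral transform: if $(X_1,\dots,X_p)\sim H$ then $U_i:=F_i(X_i)$ is $\mathrm{Unif}[0,1]$ and $F_i^{-1}(U_i)=X_i$ almost surely, so putting $u_i=F_i(x_i)$ in the definition of $C$ returns $H(x_1,\dots,x_p)$; uniqueness of $C$ on $[0,1]^p$ is also immediate here, since $\mathrm{Ran}\,F_i=[0,1]$.

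\paragraph{}
\textbf{General marginals, and the main obstacle.} When some $F_i$ has jumps the map $F_i\circ F_i^{-1}$ is no longer the identity, the naive formula can break down at the jumps, and this is where the real work lies. I would take the \emph{distributional-transform} route: on an enlarged probability space pick $V_1,\dots,V_p$ independent $\mathrm{Unif}[0,1]$ variables independent of $(X_1,\dots,X_p)$, set $\widetilde F_i(x,v)=\mathbb P(X_i<x)+v\,\mathbb P(X_i=x)$ and $U_i=\widetilde F_i(X_i,V_i)$, and prove the short lemma that $U_i\sim\mathrm{Unif}[0,1]$ and $F_i^{-1}(U_i)=X_i$ a.s.\ for \emph{every} distribution function. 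Then $C:=$ the joint distribution function of $(U_1,\dots,U_p)$ is a copula by construction, and $H(x)=\mathbb P\big(F_i^{-1}(U_i)\le x_i\ \forall i\big)=\mathbb P\big(U_i\le F_i(x_i)\ \forall i\big)=C\big(F_1(x_1),\dots,F_p(x_p)\big)$, using the equivalence $F_i^{-1}(u)\le x\iff u\le F_i(x)$. The purely analytic alternative is to define $C$ first only on $\overline{\mathrm{Ran}\,F_1}\times\cdots\times\overline{\mathrm{Ran}\,F_p}$ by $C\big(F_1(x_1),\dots,F_p(x_p)\big):=H(x_1,\dots,x_p)$ --- well defined because the $H$-volumes do not change when an $x_i$ is swapped for another point with the same $F_i$-value --- obtaining a sub-copula, and then extend it to all of $[0,1]^p$ by multilinear interpolation in the gaps. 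The one genuinely nontrivial point, which I expect to be the main obstacle, is checking that this interpolated $C$ remains $p$-increasing globally: the $C$-volume of a hyper-rectangle straddling several interpolation cells must be exhibited as a nonnegative combination of sub-copula volumes (this is the Schweizer--Sklar extension lemma). For the applications in this report only the continuous-marginal case above is actually needed, the data being drawn from continuous distributions, so in practice the first stage suffices.
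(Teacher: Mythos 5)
The paper does not actually prove this statement: Sklar's theorem is quoted as a classical result (Sklar 1959, Schweizer--Sklar), stated in general form and then in the bivariate form of Theorem 2.2, with no argument supplied, so there is no in-paper proof to measure you against. Your sketch is a correct rendering of the standard proofs and would supply what the paper omits. The continuous-marginal stage is sound; the one line worth tightening is the passage from the a.s.\ identity $F_i^{-1}(F_i(X_i))=X_i$ to the pointwise identity $C\bigl(F_1(x_1),\dots,F_p(x_p)\bigr)=H(x_1,\dots,x_p)$: for a fixed $x_i$ sitting on a flat piece of $F_i$ you should say explicitly that the interval $\bigl(F_i^{-1}(F_i(x_i)),\,x_i\bigr]$ carries zero $F_i$-probability, so replacing $x_i$ by $F_i^{-1}(F_i(x_i))$ does not change $H$. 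For general marginals your two routes are exactly the two known proofs: the distributional-transform (randomized probability integral transform) argument, which is short, fully probabilistic, and hinges on the Galois equivalence $F_i^{-1}(u)\le x \iff u\le F_i(x)$ that you invoke correctly; and the analytic sub-copula route, where you rightly identify the Schweizer--Sklar multilinear-extension lemma (nonnegativity of the interpolated volumes) as the genuinely nontrivial step --- as written you assert rather than prove it, so a complete write-up should either carry out that bookkeeping or commit to the transform proof, which avoids it entirely. Your closing observation is also apt for this report: since the paper works throughout with continuous bivariate distributions, the continuous-marginal case (together with the uniqueness remark, $\mathrm{Ran}\,F_i=[0,1]$) is all that is ever used downstream.
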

\paragraph{}
For the bivariate case, Sklar's theorem can be stated as follows:
 \begin{theorem}
	For any bivariate distribution function $H(x,y)$, let $F(x) = H(x,\infty )$ and $G(y) = H(\infty ,y)$ be the univariate marginal probability distribution functions. Then there exists a copula $C$ such that 
\begin{equation}
H(x,y)=C(F(x),G(y))\\
\end{equation}
(where the symbol C for the copula has also been used for with its cumulative distribution function). Moreover, if the marginal distributions $F(x)$ and $G(y)$ are continuous, the copula function $C$ is unique. Otherwise, the copula $C$ is unique on the range of values of the marginal distributions.
\end{theorem}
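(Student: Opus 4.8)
The plan is to construct the copula $C$ explicitly from $H$ and its margins, and then to read off uniqueness from the regularity of copulas. The single technical device doing all the work is the \emph{generalized (quantile) inverse}: for a univariate distribution function $F$ set $F^{-1}(u)=\inf\{x:F(x)\ge u\}$ for $u\in(0,1)$. I would use repeatedly the Galois-type equivalence $F^{-1}(u)\le x\iff u\le F(x)$, the identity $F(F^{-1}(u))=u$ valid when $F$ is continuous, and the fact that $F^{-1}(U)\sim F$ whenever $U$ is uniform on $[0,1]$.

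\paragraph{}
\emph{Step 1 (continuous margins, warm-up).} Suppose first $F$ and $G$ are continuous and let $(X,Y)\sim H$. Define
\begin{equation}
C(u,v):=H\big(F^{-1}(u),G^{-1}(v)\big),\qquad (u,v)\in(0,1)^2,
\end{equation}
extended to the boundary of $[0,1]^2$ in the obvious way. I would then verify the three clauses of the definition in $\#2.2$: groundedness is immediate; the uniform-margin clause follows from $C(u,1)=H(F^{-1}(u),\infty)=F(F^{-1}(u))=u$ and symmetrically $C(1,v)=v$; and for the $2$-increasing clause, for a rectangle $B=[u_1,u_2]\times[v_1,v_2]$ one gets $V_C(B)=\mathbb{P}\big(F^{-1}(u_1)<X\le F^{-1}(u_2),\,G^{-1}(v_1)<Y\le G^{-1}(v_2)\big)\ge 0$ because $F^{-1},G^{-1}$ are nondecreasing. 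Finally $H(x,y)=C(F(x),G(y))$ holds because $F^{-1}(F(x))$ and $x$ can differ only across a maximal interval on which $F$, hence also $H(\cdot,y)$, is constant, and similarly in the second argument.

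\paragraph{}
\emph{Step 2 (general margins).} When $F$ or $G$ has atoms, the naive formula above need not be $2$-increasing, so I would replace the deterministic inverse by a randomized one (the distributional transform). Adjoin $\Lambda\sim\mathrm{Uniform}[0,1]$ independent of $(X,Y)$ and set
\begin{equation}
U:=F(X-)+\Lambda\big(F(X)-F(X-)\big),\qquad W:=G(Y-)+\Lambda\big(G(Y)-G(Y-)\big).
\end{equation}
A short case distinction (whether or not $F$ is continuous at the point in question) gives $U,W\sim\mathrm{Uniform}[0,1]$ together with the almost sure identities $\{U\le F(x)\}=\{X\le x\}$ and $\{W\le G(y)\}=\{Y\le y\}$. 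Taking $C$ to be the bivariate distribution function of $(U,W)$, it is automatically grounded and $2$-increasing and has uniform margins, hence is a copula, and
\begin{equation}
C\big(F(x),G(y)\big)=\mathbb{P}\big(U\le F(x),\,W\le G(y)\big)=\mathbb{P}(X\le x,\,Y\le y)=H(x,y).
\end{equation}
This already subsumes Step 1; I would keep Step 1 only for transparency.

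\paragraph{}
\emph{Step 3 (uniqueness).} If $u\in\mathrm{Range}(F)$ and $v\in\mathrm{Range}(G)$, pick $x,y$ with $F(x)=u$, $G(y)=v$; then the representation forces $C(u,v)=H(x,y)$, so $C$ is determined on $\mathrm{Range}(F)\times\mathrm{Range}(G)$. If moreover $F$ and $G$ are continuous, their ranges are dense in $[0,1]$, and since every copula is Lipschitz, namely $|C(u_2,v_2)-C(u_1,v_1)|\le|u_2-u_1|+|v_2-v_1|$ (a consequence of $2$-increasingness together with uniform margins), the values of $C$ on a dense set pin it down everywhere, so $C$ is unique. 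The step I expect to be the genuine obstacle is the non-continuous case in Step 2: checking carefully that the randomized transform $U$ is \emph{exactly} uniform and that $\{U\le F(x)\}=\{X\le x\}$ up to a null event, since that is precisely where the jumps and flat pieces of the margins must be handled by hand.
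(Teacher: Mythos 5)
The paper itself offers no proof of this statement: Sklar's theorem is recorded there as a classical result (attributed to Sklar, 1959) and simply used, so there is no in-paper argument to compare against, and your proposal must stand on its own — which it does. Your route is the standard modern proof and is correct: the quantile-inverse construction $C(u,v)=H\bigl(F^{-1}(u),G^{-1}(v)\bigr)$ with the Galois equivalence $F^{-1}(u)\le x\iff u\le F(x)$ for continuous margins, the distributional transform to absorb atoms in the general case, and uniqueness read off on $\mathrm{Ran}(F)\times\mathrm{Ran}(G)$ and extended by the Lipschitz bound $|C(u_2,v_2)-C(u_1,v_1)|\le|u_2-u_1|+|v_2-v_1|$. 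Two refinements are worth making when you write it out. First, in Step 2 the identity $\{U\le F(x)\}=\{X\le x\}$ holds only up to a null event: if $F$ is flat immediately to the right of $x$ one can have $X>x$ yet $U=F(x)$, but this event has probability zero because $\mathbb{P}(x<X\le t)=F(t)-F(x)=0$ along the flat stretch and $\{\Lambda=0\}$ is null at a jump; a null event is all you need for $C(F(x),G(y))=H(x,y)$, and you correctly identify this as the step to handle by hand. (Using the same $\Lambda$ in both coordinates is harmless — only its independence from $(X,Y)$ and the uniformity of each transformed margin are used.) Second, when $F$ and $G$ are continuous their ranges already contain all of $(0,1)$ by the intermediate value theorem, so uniqueness on $(0,1)^2$ follows directly from Step 3's first sentence; the density-plus-Lipschitz argument is correct but only needed if you want the sharper statement about uniqueness on the closure of the ranges in the discontinuous case.
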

\paragraph{}
To understand the density function of the coupled random variable $Y_H$ it should be noticed that
\begin{equation}
P[Y_H\in [x,x+dx]\times [y,y+dy]]=H(x+dx,y+dy)-H(x+dx,y)-H(x,y+dy)+H(x,y).
\end{equation}
\paragraph{}
The expectation of a function g can be written in the following ways:
\begin{equation}
E(g(X,Y))=\int\int g(x,y)dH(x,y)=\int\int g(F_X^{-1}(x),F_Y^{-1}(y))dC(x,y).
\end{equation}
\begin{equation}
E(g(X,Y))=\int^1_0\int^1_0 g(F_X^{-1}(x),F_Y^{-1}(y))\frac{\partial}{\partial x}\frac{\partial}{\partial y}C(x,y)d(x,y).
\end{equation}

\vspace{1cm} 

\section{Estimation of Copula in Bivariate sample}

	There are several methods for estimating the copula in the bivariate case. We here discuss two important methods from them.

\subsection{Empirical Copula}

 % In this section, we will show that there are expressions for the sample versions of several measures of association analogous to those whose population versions were discussed in Sects. 5.1 and  5.2. The population versions can be expressed in terms of copulas-the sample versions will now be expressed in terms of empirical copulas and the corresponding empirical copula frequency function.
 \paragraph{}
	The simplest method for estimating a copula is the Empirical copula or sometimes called the sample copula. This is defined in terms of the order statistics of the whole sample just like the empirical distribution function. Let $\{(x_k,y_k)\}_{k=1}^n$ denote a sample of size $n$ from a continuous bivariate distribution. The empirical copula is the function $C_n$ given by:
  \begin{equation}
     C_n(\frac{i}{n},\frac{j}{n})=\frac{number\ of\ pairs\ (x,y)\ in\ the\ sample\ with\ x \leq x_{(i)},y \leq y_{(j)}}{n}.\\
  \end{equation}
where $x_{(i)}$ and $y_{(j)}$, $1 \leq i,j \leq n$, denote order statistics from the sample.
\paragraph{}
We also define the empirical copula frequency $c_n$ by:
  \begin{equation}
     c_n(\frac{i}{n},\frac{j}{n})=\begin{cases}
     																\frac{1}{n} 	& \mbox{if }(x_{(i)},y_{(j)})\mbox{ is an element of the sample,}\\
     																0							&  \mbox{otherwise.}
     																\end{cases}\\
  \end{equation}
\paragraph{}
Note that $C_n$ and $c_n$ are related via the equation 
  \begin{equation}
     C_n(\frac{i}{n},\frac{j}{n})=\sum_{p=1}^{n}\sum_{q=1}^{n}c_n(\frac{i}{n},\frac{j}{n}).\\
  \end{equation}  
  and   
  \begin{equation}
     c_n(\frac{i}{n},\frac{j}{n})=C_n(\frac{i}{n},\frac{j}{n})-C_n(\frac{i-1}{n},\frac{j}{n})-C_n(\frac{i}{n},\frac{j-1}{n})+C_n(\frac{i-1}{n},\frac{j-1}{n}).\\
  \end{equation}
  Empirical copulas were introduced and first studied by Deheuvels (1979), who called them empirical dependence functions. Empirical copulas can also be used to construct nonparametric tests for independence [Deheuvels 1979, 1981a,b] .

\subsection{The Copula-Graphic estimator}
\paragraph{}
  The previous approach of empirical copula is simple but it uses the full data-set. However in the competing-risk framework we generally observe only$ T_i = min (X_i, Y_i) $ and $\delta_i = Ind(X_i < Y_i) ~~~ \forall i = 1 ,2 ,\cdots , n $. In this section we develop a estimator, called the Copula-Graphic estimator, which uses only such type of partial data --- not the full data. For we suppose that $ Pr (X_i = Y_i) = 0 $. Then using these data we can directly estimate 

  \begin{equation}
   		k(t) = Pr(X>t, Y>t),   
   		\end{equation}
   		\begin{equation}
     p_1(t) = Pr(X\leq t, X < Y),
     \end{equation}
     \begin{equation}
      p_2(t) = Pr(Y \leq t,Y < X), ~(0\leq t \leq \inf).\\
  \end{equation}

  It is well known that, under the assumption of independence of $X$ and $Y$, the marginal distribution of $X$ is uniquely determined by these probabilities. We now show the more general result that, if the copula of $(X, Y)$ is known, then the marginal distributions of both $X$ and $Y$ are uniquely determined by the competing risk data. The following theorem is relevant in this context.\\
  \begin{theorem}
     Suppose the marginal distribution functions of $(X, Y)$ are continuous and strictly increasing in $(0,\infty)$. Suppose the copula, $C$, of $(X, Y)$, is known, and $\mu(E) > 0$ for any open set $E$ in $[0, 1] \times [0, 1]$. Then $F$ and $G$, the marginal distribution functions of $X$ and $Y$, are uniquely determined by ${k(t), p_1(t), p_2(t), t > 0}$.
  \end{theorem}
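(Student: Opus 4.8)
The plan is to express the three observable functions $k,p_1,p_2$ in terms of $F$, $G$ and the known copula $C$, to read the resulting relations as a coupled pair of integral equations with a prescribed value at the left end of the support, and then to prove that this system has \emph{at most one} increasing solution (existence being automatic, since the true marginals furnish one). Since $k(t)=P(X>t,Y>t)$ is the joint survival function evaluated on the diagonal and $H=C(F,G)$, inclusion--exclusion gives
\[
k(t)=1-F(t)-G(t)+C(F(t),G(t)).
\]
Let $C_1=\partial C/\partial u$ and $C_2=\partial C/\partial v$ be the (almost everywhere defined) partial derivatives of $C$; these are versions of the conditional distribution functions $P(V\le v\mid U=u)$ and $P(U\le u\mid V=v)$. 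Conditioning $P(X\le t,\,X<Y)$ on the value of $X$ (and symmetrically for $p_2$) yields
\[
p_1(t)=\int_{(0,t]}\bigl(1-C_1(F(x),G(x))\bigr)\,dF(x),\qquad
p_2(t)=\int_{(0,t]}\bigl(1-C_2(F(x),G(x))\bigr)\,dG(x).
\]
One checks that $k+p_1+p_2\equiv1$ and that the displayed formula for $k$ is implied by those for $p_1,p_2$ together with the boundary condition $F(0^+)=G(0^+)=0$ (valid since the marginals are supported on $(0,\infty)$). So it suffices to show: within the class of continuous, strictly increasing proper distribution functions $F,G$ on $(0,\infty)$ vanishing at $0^+$, the pair $(p_1,p_2)$ together with knowledge of $C$ determines $F$ and $G$ uniquely.

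Suppose $(F_1,G_1)$ and $(F_2,G_2)$ both satisfy these relations for the same data, and let $t_0$ be the infimum of the set where they disagree, so they coincide on $[0,t_0]$. The map $\Psi(u,v)=u+v-C(u,v)$ has nonnegative partials $1-C_1$ and $1-C_2$; moreover, for $u^*<1$ and $v_1<v_2$,
\[
\Psi(u^*,v_2)-\Psi(u^*,v_1)=P(U>u^*,\,v_1<V\le v_2)=\mu\bigl((u^*,1]\times(v_1,v_2]\bigr)>0
\]
by the full-support hypothesis, and symmetrically in the other coordinate, so $\Psi$ is strictly increasing in each argument on $[0,1)^2$. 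Since $\Psi(F_i(t),G_i(t))=p_1(t)+p_2(t)$ does not depend on $i$, the two solutions must be \emph{oppositely ordered} wherever they differ: $F_1(t)>F_2(t)$ if and only if $G_1(t)<G_2(t)$, and in particular they differ in the $F$-coordinate whenever they differ at all. Choosing (after relabelling if necessary) a maximal open interval $(a,b)$ with $a\ge t_0$ on which $F_1>F_2$, we get $F_1(a)=F_2(a)=:u_a$, $G_1(a)=G_2(a)$, and $F_1>F_2$, $G_1<G_2$ throughout $(a,b)$.

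The final step is a monotonicity comparison. For $t\in(a,b)$ and $u\in(u_a,F_2(t)]$ we have $F_1^{-1}(u)\le F_2^{-1}(u)$ (because $F_2\le F_1$ on $(a,b)$), hence, writing $\psi_i(u):=G_i(F_i^{-1}(u))$,
\[
\psi_1(u)=G_1\bigl(F_1^{-1}(u)\bigr)\le G_2\bigl(F_1^{-1}(u)\bigr)\le G_2\bigl(F_2^{-1}(u)\bigr)=\psi_2(u).
\]
Changing variables $u=F_i(x)$ in the formula for $p_1$, equating the two expressions for $p_1(t)-p_1(a)$, and subtracting the integral over $[u_a,F_2(t)]$ from both sides gives
\[
\int_{F_2(t)}^{F_1(t)}\bigl(1-C_1(u,\psi_1(u))\bigr)\,du=\int_{u_a}^{F_2(t)}\bigl[C_1(u,\psi_1(u))-C_1(u,\psi_2(u))\bigr]\,du .
\]
The right-hand side is $\le0$, since $\psi_1\le\psi_2$ and $C_1(u,\cdot)$ is non-decreasing (being a conditional distribution function; equivalently because $C$ is $2$-increasing). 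The left-hand side is strictly positive: with $v^{\dagger}:=\psi_1(F_1(t))=G_1(t)<1$ it is at least $\int_{F_2(t)}^{F_1(t)}(1-C_1(u,v^{\dagger}))\,du=\mu\bigl((F_2(t),F_1(t)]\times(v^{\dagger},1]\bigr)$, which is positive by the full-support hypothesis (the last identity uses that $C(\cdot,v^{\dagger})$ is $1$-Lipschitz, hence absolutely continuous). This contradiction establishes uniqueness, and so $F$ and $G$ are determined by $\{k(t),p_1(t),p_2(t):t>0\}$.

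I expect the genuine difficulty to lie entirely in the uniqueness step, not in the bookkeeping of the first step. Because the conditional-distribution functions $C_1,C_2$ are not assumed Lipschitz in $(u,v)$, there is no Picard--Lindel\"of / Gronwall shortcut, and all the leverage has to come from the monotonicity inherent in a copula together with the full-support hypothesis, used exactly as above. The remaining points are routine but not vacuous: one must justify that $C_i(F(x),G(x))$ is a legitimate conditional probability for $dF$-almost (respectively $dG$-almost) every $x$, control the solutions as $(F,G)\to(1,1)$ (which is why we restrict to proper marginals), and keep track of the one-sided nature of the interval $(a,b)$ and the symmetric case $F_2>F_1$.
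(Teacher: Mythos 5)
Your argument is correct. Note first that the paper itself does not prove this theorem: it is imported verbatim from Zheng and Klein (Biometrika, 1995) and Zheng's dissertation, and the report only alludes to ``the proof of Theorem 3.1'' without reproducing it, so there is no in-paper proof to compare against. Your proof is a complete, self-contained version of the argument in that original source, and the two key mechanisms are the right ones: (i) since $1-k(t)=p_1(t)+p_2(t)=\Psi(F(t),G(t))$ with $\Psi(u,v)=u+v-C(u,v)$ strictly increasing in each coordinate on $[0,1)^2$ under the full-support hypothesis, any two candidate pairs must be \emph{oppositely} ordered wherever they differ; and (ii) the representation $p_1(t)=\int_{(0,t]}\bigl(1-C_1(F(x),G(x))\bigr)\,dF(x)$, which after the substitution $u=F_i(x)$ is exactly the $\mu$-measure of the region above the ``dependence curve'' $\psi_i=G_i\circ F_i^{-1}$ --- the same geometric object ($B_t$) that Zheng--Klein work with. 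Your first-point-of-disagreement comparison, pitting the strictly positive mass $\mu\bigl((F_2(t),F_1(t)]\times(G_1(t),1]\bigr)$ against the nonpositive term coming from $\psi_1\le\psi_2$ and the monotonicity of $C_1(u,\cdot)$, is sound; I checked the orientation of both inequalities and the fact that $F_1^{-1}(u)\in(a,t]$ on the relevant $u$-range, and they hold. The technical caveats you flag (a jointly measurable version of $C_1$ serving as the disintegration kernel, $F_i(t)<1$ and $G_i(t)<1$ for finite $t$ under strict monotonicity, the relabelling to get a component where $F_1>F_2$) are exactly the ones that need flagging and are all routine. The only cosmetic quibble is that the side remark that the inclusion--exclusion formula for $k$ is ``implied by'' the formulas for $p_1,p_2$ requires a chain-rule identity along the curve $(F(t),G(t))$ that you do not need and should not lean on; the proof only uses $k+p_1+p_2\equiv 1$ and the direct inclusion--exclusion computation of $k$, both of which you establish independently.
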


  \begin{cor}
   Let $u(x, y)$ be the density function of $C$. If $u(x, y) > 0$ for any $(x, y) \in [0, 1] \times [0, 1]$, then the result of Theorem 3.1 holds. 
  \end{cor}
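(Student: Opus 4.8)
The plan is to deduce, from the strict positivity of the copula density $u$, the measure-theoretic hypothesis used in Theorem 3.1, namely that $\mu_c(E) > 0$ for every nonempty open set $E \subseteq [0,1]\times[0,1]$, where $\mu_c$ denotes the probability measure induced by $C$. Once this is in hand, the corollary follows immediately: the remaining hypotheses of Theorem 3.1 --- continuity and strict monotonicity of the marginal distribution functions on $(0,\infty)$ and knowledge of the copula $C$ --- are precisely the standing assumptions carried into the corollary, so Theorem 3.1 applies verbatim and yields that $F$ and $G$ are uniquely determined by $\{k(t),p_1(t),p_2(t):t>0\}$.

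First I would unwind the phrase ``$u$ is the density function of $C$'': it means that $\mu_c$ is absolutely continuous with respect to two-dimensional Lebesgue measure $\lambda$ on the unit square, with Radon--Nikodym derivative $u$, so that $\mu_c(A)=\int_A u(x,y)\,dx\,dy$ for every Borel set $A\subseteq[0,1]^2$.

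Next, fix an arbitrary nonempty open set $E\subseteq[0,1]^2$. A nonempty relatively open subset of the square always has positive Lebesgue measure, so $\lambda(E)>0$. Since $u\ge 0$ and, by hypothesis, $u(x,y)>0$ for every $(x,y)\in[0,1]^2$, the integrand is strictly positive throughout $E$; hence $\mu_c(E)=\int_E u\,d\lambda>0$, for otherwise $u$ would vanish $\lambda$-almost everywhere on $E$, contradicting $\lambda(E)>0$ together with $u>0$. This is the only step involving measure theory, and it is just the elementary fact that a strictly positive function integrated over a set of positive measure has a strictly positive integral.

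Having verified that $\mu_c(E)>0$ for every open $E$, I would simply invoke Theorem 3.1 to conclude. I do not anticipate any genuine obstacle: the corollary merely replaces the somewhat abstract requirement ``$\mu(E)>0$ for all open $E$'' by the more transparent and easily checkable sufficient condition that the copula density be everywhere positive, and the implication between the two conditions is routine.
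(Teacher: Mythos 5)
Your proposal is correct: positivity of the density $u$ on the whole square gives $\mu_c(E)=\int_E u\,d\lambda>0$ for every nonempty open $E$ (since such an $E$ has positive Lebesgue measure and a nonnegative function with zero integral would vanish a.e.), and then Theorem 3.1 applies verbatim. The paper itself states this corollary without proof (it is taken from Zheng's work), but your reduction of the everywhere-positive-density condition to the ``$\mu(E)>0$ for all open $E$'' hypothesis is exactly the intended, standard argument, so there is nothing to add.
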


   Sometimes the condition that $F$ and $G$ are strictly increasing on $(0,\inf)$ is not satisfied since there may exist a time, $t_0$, such that $F(t) = 1$ for $t > t_0$. To deal with this situation, we can prove the following corollary: \\
  \begin{cor} In Theorem 3.1, if there are times $t_1$ and $t_2$ such that $F(t_1) = 1$ and $G(t_2) = 1$, and both $F$ and $G$ are strictly increasing in $(0, t_1)$ and $(0, t_2)$, respectively, then $F$ and $G$ are uniquely determined on $(0, min (t_1, t_2))$.
  \end{cor}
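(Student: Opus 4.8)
The plan is to reduce the statement to Theorem 3.1 by restricting everything to the interval $(0,t_\ast)$, where $t_\ast:=\min(t_1,t_2)$; on this interval $F$ and $G$ are again continuous and strictly increasing with values in $(0,1)$. The first thing to check is that $t_\ast$ can be recovered from the data alone. Writing $k(t)=Pr(X>t,Y>t)=\mu_c\big((F(t),1]\times(G(t),1]\big)$ and invoking the inherited hypothesis that $\mu$ assigns positive mass to every nonempty open subset of $[0,1]\times[0,1]$, we see that $k(t)>0$ whenever $F(t)<1$ and $G(t)<1$ --- that is, for every $t\in(0,t_\ast)$ --- whereas for $t\ge t_\ast$ at least one of $F(t),G(t)$ equals $1$, the rectangle is empty, and $k(t)=0$. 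Since $k$ is continuous, $t_\ast=\inf\{t>0:k(t)=0\}$ is a functional of $\{k(t):t>0\}$, so it suffices to prove that $F$ and $G$ are uniquely determined on $(0,t_\ast)$.

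\paragraph{}
Next I would observe that, for $t<t_\ast$, each of $k(t),p_1(t),p_2(t)$ depends only on the known copula $C$ and on the restrictions of $F$ and $G$ to $(0,t_\ast)$: this is clear for $k$ from $k(t)=1-F(t)-G(t)+C(F(t),G(t))$, and for the sub-distribution functions from the representations $p_1(t)=\int_{(0,t]}\big(1-\partial_1 C(F(s),G(s))\big)\,dF(s)$ and its symmetric counterpart for $p_2$, all of whose ingredients involve $F$ and $G$ only on $(0,t_\ast)$. Hence, on $(0,t_\ast)$, the competing-risk data carries exactly the information about $(F,G)$ that it carries about the global marginals in Theorem 3.1, and I would then re-run the construction from the proof of that theorem on the bounded interval $(0,t_\ast)$: that construction determines $F(t)$ and $G(t)$ from the data seen up to time $t$ only, by a forward (left-to-right) resolution of the relations between $(F,G)$ and $(k,p_1,p_2)$, and never appeals to the marginals beyond the current time or to their behaviour at $\infty$. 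Carrying this out for all $t<t_\ast$ pins down $F$ and $G$ on $(0,t_\ast)$.

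\paragraph{}
The main obstacle is making the last step rigorous --- namely, verifying that the recovery scheme of Theorem 3.1 really does transfer to a finite interval. The one place it could break down is near the right endpoint $t_\ast$, where $F(t)$ and/or $G(t)$ approach $1$ and the governing relations may degenerate; I would handle this by first recovering $F$ and $G$ on each $(0,t_\ast-\varepsilon)$ and then letting $\varepsilon\downarrow 0$, using the monotonicity and continuity of $F$ and $G$. A secondary point worth noting is that the identification of $t_\ast$ in the first step, and hence the whole reduction, rests on the hypothesis $\mu(E)>0$ for open $E$ carried over from Theorem 3.1: without it, $k$ could vanish before either marginal reaches $1$, and $t_\ast$ would no longer be identifiable from the data.
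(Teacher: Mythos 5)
The paper does not actually prove this corollary: like Theorem 3.1 and the preceding corollary, it is imported from Zheng and Klein's work on the copula-graphic estimator, and the only accompanying text is the remark that $F$ and $G$ cannot be pinned down beyond $\min(t_1,t_2)$ because no data are observed after that time. So there is no in-paper argument to measure you against, and I can only assess your sketch on its own terms. Its overall route --- identify $t_\ast=\min(t_1,t_2)$ from the data, check that for $t<t_\ast$ the triple $(k(t),p_1(t),p_2(t))$ involves $F$ and $G$ only through their restrictions to $(0,t]$, and then rerun the identifiability argument of Theorem 3.1 on the finite interval --- is the natural one, and your preliminary steps are correct: since $\mu$ charges every nonempty open set, $k(t)>0$ exactly for $t<t_\ast$, so $t_\ast=\inf\{t>0:k(t)=0\}$ is recoverable from the data; and $k(t)=1-F(t)-G(t)+C(F(t),G(t))$ together with $p_1(t)=\int_{(0,t]}\{1-\partial_1 C(F(s),G(s))\}\,dF(s)$ (a genuinely useful observation, since a priori the event $\{X\le t, X<Y\}$ looks as if it involved $G$ beyond $t$) shows that the data up to time $t$ depend on $(F,G)$ only up to time $t$.

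The gap --- which you flag yourself --- is precisely the load-bearing step: you assert that the proof of Theorem 3.1 proceeds by a ``forward resolution'' that never consults the marginals beyond the current time or their behaviour at $\infty$, but neither you nor this paper has that proof in hand, so the corollary is being reduced to an unverified structural property of an unseen argument. To close it, supply the local uniqueness argument directly: if $(F_1,G_1)$ and $(F_2,G_2)$ both satisfy the hypotheses and generate the same $\{k,p_1,p_2\}$, set $t^\dagger=\sup\{t<t_\ast : (F_1,G_1)=(F_2,G_2) \mbox{ on } (0,t]\}$ and rule out $t^\dagger<t_\ast$ by noting that just to the right of $t^\dagger$ the two curves $s\mapsto(F_i(s),G_i(s))$ in the unit square separate, the region between them is a nonempty open set and hence has positive $\mu$-mass, and this forces $k$ or $p_1$ (or $p_2$) to differ at some $t<t_\ast$ --- exactly the mechanism behind Theorem 3.1, and manifestly local, hence valid on $(0,t_\ast)$. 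Once this is written out, your worry about the right endpoint disappears: uniqueness on $(0,t_\ast)$ only ever requires comparing the candidates at times strictly below $t_\ast$, so no degeneracy at $t_\ast$ is encountered and the $\varepsilon$-limiting device is unnecessary.
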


  \paragraph{}
   Note that here the marginal, $F$ and $G$, are only uniquely determined up to $min(t_1, t_2)$, which is reasonable since no data are observed after time $min(t_1, t_2)$.
   We have seen in the previous discussion that, given the copula, the marginal distributions of $X$ and $Y$ are uniquely determined by estimable quantities. The next step is to estimate these distributions, given the observed data and the assumed copula. 
   If $F$ and $G$ are the marginal distributions of $X$ and $Y$, then for any $t$, we have
   
  \begin{equation}
    \mu_c(A_t)=Pr(X>t,Y>t)=k(t) , 
      \end{equation}
  \begin{equation}
    \mu_c(B_t)=Pr(X\leq t,Y>t)=p_1(t) , 
   \end{equation}
   
   where
   \begin{equation}
    A_t=\{(x,y)|,F(t)<x\leq 1,G(t)<y\leq 1\},B_t=\{(x,y)|,0\leq x\leq F(t),GF^{-1}(t)\leq y\leq 1\}, 
      \end{equation}
  \begin{equation}
    \mu_c(B_t)=Pr(X\leq t,Y>t)=p_1(t) , \\
  \end{equation}

 \begin{figure}[h]
	\centering
		\includegraphics[width=79mm, height=60mm]{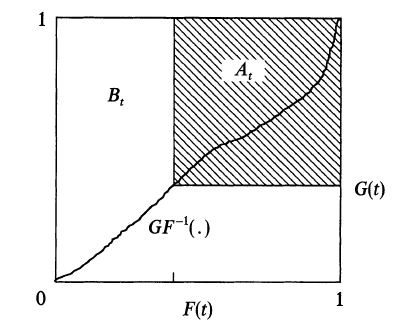}
	\caption{Relation of F(t) and G(t) on the unit square}
	\label{fig:cge}
\end{figure}

   From the proof of Theorem 3.1 we see that these two relationships uniquely determine $F$ and $G$. We find estimators $\hat{F}$ and $\hat{G}$ of $F$ and $G$ which preserve these properties on a selected grid of $m$ points $0<t_1 <t_2<\cdots < t_m < max {T_i, i = 1,\cdots, n}$. To construct our estimator, let
 $$
    \hat{A}_t=\{(x,y)|,\hat{F}(t)<x\leq 1,\hat{G}(t)<y\leq 1\},
    $$
    $$
    \hat{B}_t=\{(x,y)|,0\leq x\leq F(t),GF^{-1}(t)\leq y\leq 1\}, 
    $$
    $$
    \mu_c(B_t)=Pr(X\leq t,Y>t)=p_1(t) , \\
 $$  
   
   Let $Pr_{est}(X>t,Y>t)=n^-1\sum I(T_j>t) $, $Pr_{est}(X\leq t,X<Y)=n^-1\sum I(T_j\leq t,\delta _j=1)$ be the empirical estimates of $Pr(X>t,Y>t)$ and $Pr(X\leq t,X<Y)$.In fact any consistent estimators of $Pr(X>t,Y>t)$ and $Pr(X\leq t,X<Y)$ can be used. Find $\hat{F}(t_i)$ as the root of (4 4), subject to the definition of $\hat{G}(t_i)$ as a function of $\hat{F}(t_i)$ by solving equation(4 3). Let $\hat{F}$ and $\hat{G}$ be straight lines in each interval $(t_i, t_{i + 1})$.
   
   \begin{equation}
    \mu_c(\hat{A}_{t_i})=Pr_{est}(X>t_i,Y>t_i)=0 , \\
    \mu_c(\hat{B}_{t_i})=Pr_{est}(X\leq t_i,Y>t_i)=0 , \\
   \end{equation}

   The algorithm for constructing $F$ and $G$, based on a bisection root-finding algorithm, is as follows:  \\
\begin{itemize}
\item
   Step 1: For $i = 1$, given an initial guess for $\hat{F}(t_1)$, say $\hat{F}^{(1)}(t_l)$, find $\hat{G}^{(1)}(t_1)$ by solving (4.3).
\item
   Step 2. If this $(\hat{F}(t_1),\hat{G}(t_1))$ satisfies (4 4), go to Step 4. \\
   Otherwise use (4 4) to decide whether the next guess, $\hat{F}(t_2)$, is larger or smaller than this $\hat{F}(t_1)$. \\
   Use the midpoint of the interval $(\hat{F}(t_1), 1)$ or $(0, \hat{F}(t_1)$ as the value of $\hat{F}(t_2)$ accordingly. 
\item   
   Step 3. Repeat Steps 1 and 2 using the current estimate of $\hat{F}(t)$ and $\hat{G}(t)$. After i steps the new value of $\hat{F}(t)$ is the midpoint of either the interval $(a_i, \hat{F}^{(i)}(t_i))$ or $(\hat{F}^{(i)}(t_i), b_i)$, where, \\
   \begin{equation}
    a_i= max \{\hat{F}^{(k)}(t_1)|k<i,\hat{F}^{(k)}(t_1)<\hat{F}^{(i)}(t_i) , \\
    b_i= min \{\hat{F}^{(k)}(t_1)|k<i,\hat{F}^{(k)}(t_1)>\hat{F}^{(i)}(t_i) , \\
   \end{equation}
    This process continues until we find $\hat{F}(t_1)$ and $\hat{G}(t_1)$ which satisfy (4 3) and (4 4). It is clear that the convergence of this algorithm is guaranteed. 
\item
   Step 4. Repeat Steps 1-3 for $i = 2,\cdots, m$. Any increasing right-continuous function such that the function $\hat{G}\hat{F}^{-1}(t)$ is a straight line on each interval $[\hat{F}(t_{i-1}),\hat{F}(t_i)]$ will yield a consistent estimator. For $t> max \{T_i\}$, define $\hat{F}(t)$ and $\hat{G}(t)$ to be $\hat{F}(t_m)$ and $\hat{G}(t_m)$ accordingly.
 \end{itemize}
   
   The following result is proved in the 1992 Ohio State University Ph.D. dissertation 'On the use of copulas in dependent competing risks' by Ming   Zheng, page 47.
   
   \begin{theorem}
   Suppose that two marginal distribution functions $F$, $G$, are continuous and strictly increasing on $(0,\infty )$, and the assumed copula has density function $u(x, y) > 0$on $[0, 1] \times [0, 1]$. Then $\hat{F}_n$ and $\hat{G}_n$ are strongly consistent for $F$ and $G$. That is with probability $1$ as $n\longrightarrow \infty$, $\hat{F}_n(t)\longrightarrow F(t)$ and $\hat{G}_n(t)\longrightarrow G(t)$ for all $t \in [0, \infty )$. 
   \end{theorem}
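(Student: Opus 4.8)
\textbf{}The plan is to reduce the strong consistency to three ingredients: the strong uniform law of large numbers for the empirical competing-risk functionals that drive the estimator; an implicit-function-theorem refinement of the proof of Theorem~3.1 (whose hypotheses are met here, since $u>0$ forces $\mu_c(E)>0$ on every open $E$) giving the continuity of the map that recovers $(F,G)$ from $\{k(t),p_1(t),p_2(t)\}$; and the classical fact that monotone functions converging on a dense set to a continuous limit converge locally uniformly. Everything is argued on a fixed sample path in the probability-one event described next, so all limits are deterministic. Since $(T_j,\delta_j)$ are i.i.d., the Glivenko--Cantelli theorem gives, with probability one,
$$
\sup_{t\ge 0}\Big|\frac1n\sum_{j=1}^{n} I(T_j>t)-k(t)\Big|\to 0,
\qquad
\sup_{t\ge 0}\Big|\frac1n\sum_{j=1}^{n} I(T_j\le t,\ \delta_j=1)-p_1(t)\Big|\to 0,
$$
and similarly for the estimate $\hat p_{2,n}$ of $p_2$. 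Moreover, as $F$ and $G$ are supported on all of $(0,\infty)$ we have $\Pr(T>M)>0$ for every $M$, so $\max_j T_j\to\infty$ almost surely, and the grid $0<t_1<\dots<t_m<\max_j T_j$, taken with mesh $\to 0$ and $t_m\to\infty$, eventually covers every fixed $t$.

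Fix a grid point $t_i$. By construction $(\hat F_n(t_i),\hat G_n(t_i))$ solves the empirical pair of equations $\mu_c(\hat A_{t_i})=\hat k_n(t_i)$ and $\mu_c(\hat B_{t_i})=\hat p_{1,n}(t_i)$, with $\hat G_n(t_i)$ expressed through $\hat F_n(t_i)$ via the piecewise-linear interpolant of $\hat G\hat F^{-1}$ already fixed on the earlier knots. Writing $\bar C(a,b)=\mu_c\big((a,1]\times(b,1]\big)$, the hypothesis $u>0$ on $[0,1]^2$ makes $\bar C$ of class $C^1$, strictly decreasing in each argument, with nowhere-vanishing gradient, and the functional in the second equation is likewise $C^1$ with nonzero derivative. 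By the implicit function theorem the system thus has a locally unique root depending continuously --- indeed locally Lipschitz --- on its right-hand side $(\hat k_n(t_i),\hat p_{1,n}(t_i))$, and the bisection scheme of the construction converges to it. Letting $n\to\infty$ with $t_i$ fixed, the right-hand side tends to $(k(t_i),p_1(t_i))$ by the first step, the limiting system is precisely the one whose unique solution Theorem~3.1 identifies as $(F(t_i),G(t_i))$, and continuity of the root map gives $(\hat F_n(t_i),\hat G_n(t_i))\to(F(t_i),G(t_i))$. Since the grid refines, this yields $\hat F_n\to F$ and $\hat G_n\to G$ on a dense subset of $(0,\infty)$.

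Finally, each $\hat F_n$ and $\hat G_n$ is nondecreasing and right-continuous while $F$ and $G$ are continuous by hypothesis, so the standard monotone-convergence argument upgrades dense-set convergence to locally uniform convergence on $(0,\infty)$; in particular $\hat F_n(t)\to F(t)$ and $\hat G_n(t)\to G(t)$ for every $t\in(0,\infty)$. At $t=0$ both sides vanish, and for $t\ge\max_j T_j$ the estimator equals its value at $t_m$, which is harmless because $\max_j T_j\to\infty$. Hence convergence holds for all $t\in[0,\infty)$ on the probability-one event above, which is the assertion.

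The main obstacle is the uniform control demanded by the middle step. The construction is recursive: $\hat F_n(t_i)$ depends on the previously fixed knots through the piecewise-linear interpolation of $\hat G\hat F^{-1}$, so one must show that the Lipschitz modulus of the recovery map does not deteriorate along the grid and, crucially, stays bounded as $t\downarrow 0$, where the sets $\hat A_t$ and $\hat B_t$ collapse toward the boundary of the unit square. This is exactly where the full force of $u>0$ on the closed square is needed: it keeps the pertinent Jacobians bounded away from zero uniformly, so that both the discretisation error and the Glivenko--Cantelli error are absorbed without accumulating over the $m$ steps. Establishing this uniformity --- rather than the two routine ingredients --- is where the real work lies.
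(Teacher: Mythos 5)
The paper itself offers no proof of this theorem: it is quoted from page 47 of Ming Zheng's 1992 Ohio State dissertation (see also Zheng and Klein, Biometrika 1995), so there is no internal argument to compare yours against. Judged on its own terms, your outline follows the natural route --- Glivenko--Cantelli for the empirical versions of $k$, $p_1$, $p_2$; continuity of the map that recovers $(F,G)$ from these functionals, with $u>0$ supplying the nondegeneracy via Theorem 3.1 and its corollary; and the standard monotonicity argument upgrading convergence on a dense set to convergence everywhere against continuous limits. Those are indeed the right ingredients, and the first and third are routine as you say.

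However, the proposal has a genuine gap, and you name it yourself in the closing paragraph: the entire difficulty of the theorem lies in the recursive structure of the copula-graphic estimator. The quantity $\mu_c(B_t)$ depends on the whole curve $G F^{-1}$ on $[0,F(t)]$, not merely on the pair $(F(t),G(t))$, so the equation solved at the knot $t_i$ is contaminated by the interpolation and estimation errors committed at all earlier knots, and the number of knots $m=m_n$ grows with $n$ when the grid is taken at the observed event and censoring times. A pointwise implicit-function-theorem argument at a fixed $t_i$ therefore does not suffice; one must show that the error-propagation map has a Lipschitz modulus bounded uniformly over the grid (and in particular as $t\downarrow 0$ and as $F(t)\uparrow 1$, where $A_t$ and $B_t$ degenerate), so that the Glivenko--Cantelli and discretisation errors do not accumulate across the recursion. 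You explicitly defer this step (``where the real work lies'') rather than prove it; moreover, your appeal to $u>0$ on the closed square to keep the relevant Jacobians uniformly away from zero needs continuity, or at least a positive lower bound, of $u$, which the stated hypothesis alone does not provide. As written, this is a sensible plan of attack but not a proof: the decisive uniform stability estimate, which is the heart of Zheng's dissertation argument, is missing.
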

   
   \paragraph{}
Using Theorem 3.1 and the proof of Theorem 3.4 one can show that given any continuous copula for $X$ and $Y$ and set of estimable probabilities $k(t) = Pr(X > t, Y> t)$, $p_1(t) = Pr(X\leq t, X < Y)$ and $p_2(t) = pr (Y \leq t, Y < X)$ there exists at least one set of marginal distributions $F$ and $G$ which make this possible. This result suggests that combining the observable data with any assumed continuous copula will yield an estimate of a well- defined marginal survival function. \\
\paragraph{}   
   A natural way of choosing the grid on which the above estimator is calculated is to take $t_1,\cdots, t_m$. to be the distinct times at which individuals die or are censored. Here $m$ is the number of such distinct times. With this grid and using a step function in each interval $(t_i-1, t_i)$, the estimator defined above is much easier to compute. We shall call this estimator the copula-graphic estimator. For this estimator, if $\delta_i = 1$ then $\hat{G}(t_i) = \hat{G}(t_i-1)$, while if $\delta_i = 0$ then $\hat{F}(t_i) = \hat{F}(t_i-1)$. Let $t_0$ be $0$, and $F(t_0) = G(t_0) = 0$. We have that,\\
  for $\delta_i = 1$,
  \begin{equation}
   \mu_c(A_{t_i}) = 1 - \hat{F}(t_i)- \hat{G}(t_{i-1}) + C\{\hat{G}(t_i), \hat{G}(t_{i-1})\} = Pr_{est} (X > t_i, Y> t_i); 
  \end{equation}
    and for $\delta_i = 0$,
  \begin{equation}
   \mu_c(A_{t_i}) = 1 - \hat{F}(t_{i-1})- \hat{G}(t_i) + C\{\hat{G}(t_{i-1}), \hat{G}(t_i)\} = Pr_{est} (X > t_i, Y> t_i). 
  \end{equation}  
    Then $F(t_i)$ and $G(t_i)$ are found by solving equation (20) or (21) iteratively. 
    \\
  \begin{theorem}
    The copula-graphic estimator is a maximum likelihood estimator.\\
 \end{theorem}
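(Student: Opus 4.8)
The plan is to exhibit the likelihood of the observed competing-risks data $\{(T_i,\delta_i)\}_{i=1}^{n}$ with the assumed copula $C$ held fixed, viewed as a function of the unknown marginals $(F,G)$, and then to check that the copula-graphic estimator $(\hat F,\hat G)$ is exactly its maximizer. In the representation $H(x,y)=C(F(x),G(y))$, an individual observed to fail from the first cause at time $t$ (that is, with $\delta=1$) contributes to the likelihood the model probability of the event $\{X=t<Y\}$, and an individual with $\delta=0$ contributes the model probability of $\{Y=t<X\}$. Each of these probabilities is a $C$-volume $V_C$ of a thin rectangle in the unit square whose $x$- and $y$-sides are read off from the values of $F$ and $G$ at $t$ and at the grid point immediately preceding it. So the first order of business is to write $L(F,G)=\prod_{\delta_i=1}V_C(R_i^{(1)})\prod_{\delta_i=0}V_C(R_i^{(2)})$ with $R_i^{(1)},R_i^{(2)}$ the corresponding rectangles.

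\textbf{Step 1 (reduce to discrete marginals).} As in the classical argument identifying the Kaplan--Meier estimator as a nonparametric MLE, I would first show that a maximizing pair $(F,G)$ can always be taken to be right-continuous step functions placing all of their mass on the distinct observed times $t_1<\cdots<t_m$: moving any probability mass that sits strictly between two consecutive observed times leftward onto the earlier one only increases (weakly) each factor $V_C(R_i^{(\cdot)})$, by monotonicity of $C$ and of the margins. This reduces the problem to maximizing over the finite-dimensional family parametrized by $u_i=F(t_i)$, $v_i=G(t_i)$, $i=1,\dots,m$, with $u_0=v_0=0$ and the natural monotonicity constraints; since $\delta_i=1$ forces $v_i=v_{i-1}$ and $\delta_i=0$ forces $u_i=u_{i-1}$, there is exactly one free jump per observation. \textbf{Step 2 (assemble the $C$-volumes).} Each factor is then an explicit function of the $u$'s and $v$'s built from the survival expression $1-x-y+C(x,y)$; writing $R_i^{(1)}=(u_{i-1},u_i]\times(v_{i-1},1]$ for $\delta_i=1$ and $R_i^{(2)}=(u_{i-1},1]\times(v_{i-1},v_i]$ for $\delta_i=0$ makes the log-likelihood a sum of logarithms of such volumes.

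\textbf{Step 3 (match the score equations to the defining recursion).} Differentiating $\log L$ in each free jump and telescoping consecutive $C$-volume terms, the first-order conditions simplify --- after clearing denominators and cancelling common factors --- to exactly the recursive identities (20) and (21) that define $(\hat F,\hat G)$, together with the side relation that fixes $\hat G\hat F^{-1}$ on each grid interval (the relation solved in Step~1 of the construction algorithm, equation (4.3)). Thus the copula-graphic estimator is a stationary point of the log-likelihood. \textbf{Step 4 (global, unique maximizer).} I would then verify that this stationary point is the maximum --- either by showing the constrained log-likelihood is concave after reparametrizing in terms of discrete crude hazards, in analogy with the Kaplan--Meier case, or by a direct exchange comparison --- and that it is the only admissible solution; the latter is already furnished by Theorem~3.1 and its corollaries, which show that $k(t),p_1(t),p_2(t)$ together with $C$ determine $F$ and $G$ uniquely (applied to the empirical $\mathrm{Pr}_{est}$ versions).

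I expect Step~3 to be the crux. Unlike the ordinary right-censored likelihood, the copula genuinely couples the two margins, so $L$ does not factor into independent one-dimensional pieces; one has to carry the bookkeeping of the $C$-volumes together with the profile relation $\hat G=\hat G(\hat F)$ coming from (4.3) in order to see the multivariate score equations collapse to the scalar recursion (20)/(21). A secondary technical point is making the ``mass only at observed times'' reduction of Step~1 fully rigorous under the weak hypothesis that $C$ merely possesses a positive density rather than any further analytic structure.
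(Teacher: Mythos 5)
You should first note that the paper itself offers no proof of this theorem: it is stated bare, with the surrounding results attributed to Zheng's 1992 dissertation and Zheng \& Klein (1995), so there is nothing in the paper to compare your argument against line by line. Judged on its own terms, your outline has a genuine gap at exactly the point you yourself identify as the crux. Steps 3 and 4 are asserted, not carried out. The copula-graphic estimator is not defined by score equations; it is defined by moment-matching, i.e.\ by forcing the model probabilities $\mu_c(\hat{A}_{t_i})$, $\mu_c(\hat{B}_{t_i})$ to equal the empirical estimates $\Pr_{est}(X>t_i,Y>t_i)$, $\Pr_{est}(X\le t_i, X<Y)$ (the relations culminating in (20)--(21)). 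Your claim that differentiating $\sum_i \log V_C(R_i^{(\cdot)})$ in the free jumps and ``telescoping'' collapses the first-order conditions to exactly (20)--(21) is not demonstrated and is not obvious for a general copula; likewise the Step 4 claim of concavity ``in analogy with Kaplan--Meier'' has no justification once $C$ genuinely couples the margins, since the logarithm of a $C$-volume need not be concave in the jump parameters.

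The way to close both gaps at once --- and in effect the argument the cited source relies on --- is an invariance (saturated multinomial) argument rather than a stationarity computation. After your Step 1 reduction to mass at the observed times, the likelihood depends on $(F,G)$ only through the cell probabilities $\Pr(T\in(t_{i-1},t_i],\,\delta=1)$, $\Pr(T\in(t_{i-1},t_i],\,\delta=0)$ and $\Pr(T>t_m)$, each of which is a $C$-volume determined by $(F,G)$ and the fixed copula. As a function of these cell probabilities the likelihood is a saturated multinomial, maximized by the empirical frequencies. By Theorem 3.1 (uniqueness) together with the existence remark following Theorem 3.4, there is a pair $(\hat{F},\hat{G})$ whose induced cell probabilities are exactly those empirical frequencies, and by its defining equations that pair is the copula-graphic estimator; hence it attains the maximum of the likelihood, with no differentiation of $C$-volumes and no concavity needed. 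Without either this argument or an actual verification of your Steps 3--4, what you have is a plausible plan, not a proof.
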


The following theorem, proved on page 57 of Zheng's Ph.D. dissertation, shows that the new estimator can be considered a generalisation of the Kaplan-Meier estimator to non-independent censoring.\\

  \begin{theorem}
    For the independence copula $C(x, y) = xy$, when $t < t_n$, the largest observed time, the copula-graphic estimates of marginal survival functions are exactly the Kaplan- Meier estimates.\\
  \end{theorem}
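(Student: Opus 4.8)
First I would reduce the statement to a finite check at the grid points. The copula-graphic estimator is built as a right-continuous step function with jumps only at the distinct observed times $t_1<\dots<t_n$, and the Kaplan--Meier estimate is likewise constant on each $(t_j,t_{j+1})$, so it suffices to prove $\widehat F(t_j)=\widehat F^{KM}(t_j)$ and $\widehat G(t_j)=\widehat G^{KM}(t_j)$ for every $j$ with $t_j<t_n$, where $\widehat F^{KM},\widehat G^{KM}$ denote the Kaplan--Meier estimates of the marginal survival functions of $X$ and $Y$ (obtained from $(T_i,\delta_i)$ by treating $\delta_i=1$, resp. $\delta_i=0$, as the ``event''). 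Write $a_j=1-\widehat F(t_j)$ and $b_j=1-\widehat G(t_j)$, with $a_0=b_0=1$. Since the $T_i$ are exactly $t_1<\dots<t_n$, one has $Pr_{est}(X>t_j,Y>t_j)=n^{-1}\sum_k I(T_k>t_j)=(n-j)/n$, and because there are no ties the Kaplan--Meier products are $\widehat S_X^{KM}(t_j)=\prod_{i\le j,\ \delta_i=1}\frac{n-i}{n-i+1}$ and $\widehat S_Y^{KM}(t_j)=\prod_{i\le j,\ \delta_i=0}\frac{n-i}{n-i+1}$. These two closed forms are the targets.

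The plan is then to substitute the independence copula $C(x,y)=xy$ into the defining recursions (20) and (21) and extract a simple invariant. For $C(x,y)=xy$ the quantity $\mu_c(A_{t_i})$ factorises: when $\delta_i=1$ the construction freezes $\widehat G$ (so $b_i=b_{i-1}$) and (20) reads $a_i b_{i-1}=(n-i)/n$; when $\delta_i=0$ it freezes $\widehat F$ (so $a_i=a_{i-1}$) and (21) reads $a_{i-1}b_i=(n-i)/n$. In both cases, using the frozen coordinate, this is exactly $a_i b_i=(n-i)/n$, and since $a_0b_0=1=(n-0)/n$ an induction gives $a_i b_i=(n-i)/n$ for all $0\le i\le n$. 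For $t_j<t_n$ and $i\le j$ we have $n-i\ge 1$, so every $a_i,b_i$ is strictly positive and we may divide consecutive relations: if $\delta_i=1$, then dividing $a_ib_{i-1}=(n-i)/n$ by $a_{i-1}b_{i-1}=(n-i+1)/n$ yields $a_i/a_{i-1}=(n-i)/(n-i+1)$ and $b_i/b_{i-1}=1$, while if $\delta_i=0$ the roles are interchanged. Telescoping from $1$ to $j$ produces precisely $a_j=\prod_{i\le j,\ \delta_i=1}\frac{n-i}{n-i+1}=\widehat S_X^{KM}(t_j)$ and $b_j=\prod_{i\le j,\ \delta_i=0}\frac{n-i}{n-i+1}=\widehat S_Y^{KM}(t_j)$, which is the claim.

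The computation is short, so the work is mostly organisational, and I expect two points to need care. First, one must read the recursion out of (20)--(21) correctly: the choice of which marginal is held fixed is governed by $\delta_i$, and for a general copula the equation for the remaining coordinate is a genuine implicit (root-finding) equation; it is only the independence copula that collapses it to the linear relation used above, so I would explicitly note that here the bisection step of the algorithm returns the obvious explicit root and the estimator is unambiguously determined. Second, the restriction $t<t_n$ is essential and should be flagged: at $i=n$ the invariant forces $a_n b_n=0$, so one marginal reaches $1$ and a single equation (20) or (21) no longer separates $a_n$ from $b_n$ --- the familiar indeterminacy of Kaplan--Meier past the last observation. The main (mild) obstacle is simply making sure the no-ties assumption --- guaranteed here by taking the grid to be the distinct event/censoring times --- is invoked consistently, since it is what makes both the $(n-i)/n$ right-hand side and the untied product-limit form legitimate.
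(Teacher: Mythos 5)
Your proposal is correct. Note that the paper itself contains no proof of this theorem: it only states the result and refers to page 57 of Zheng's Ph.D.\ dissertation, so there is no in-paper argument to compare against, and your derivation supplies exactly the missing (and standard) proof. Substituting $C(x,y)=xy$ so that $\mu_c(A_{t_i})$ factorises as $(1-\hat F)(1-\hat G)$, combining this with the freezing rule of the step-function construction ($\hat G$ constant when $\delta_i=1$, $\hat F$ constant when $\delta_i=0$) to get the invariant $a_ib_i=(n-i)/n$, and then dividing consecutive relations and telescoping to the untied product-limit form $\prod_{i\le j,\,\delta_i=1}\frac{n-i}{n-i+1}$ is precisely the argument behind Zheng's original result; your treatment of the two delicate points --- strict positivity of $a_{i-1}b_{i-1}$, which justifies the division and is exactly what the restriction $t<t_n$ buys, and the indeterminacy at $i=n$ where $a_nb_n=0$ --- is sound. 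One minor remark: equations (20)--(21) as printed contain a typo ($C\{\hat G(t_i),\hat G(t_{i-1})\}$ should read $C\{\hat F(t_i),\hat G(t_{i-1})\}$, and analogously in (21)); you read them in the intended way, and your assumption that the grid consists of the $n$ distinct observed times is the almost-sure situation under the paper's continuity and $\Pr(X=Y)=0$ assumptions.
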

  
\paragraph{}    
    To estimate the variance of our estimator, we use the jack-knife variance estimator  \cite{efron}. That is,:\\
  \begin{equation}
    Var_{s_{est}} = \frac{n-1}{n}\sum^n_{i=1}\{\hat{S}_{(i)}(t)-\hat{S}_{(.)}(t)\}.
  \end{equation}  
    where $\hat{S}_{(i)}$ is the copula-graphic estimator using \\$(T_1,\delta_1),\cdots,(T_{i-1},\delta_{i-1}),(T_{i+1},\delta{i+1}),\cdots,(T_n,\delta_n)$
  \begin{equation}
    \hat{S}{(.)} = \frac{1}{n}\sum^n_{i=1}\hat{S}{(i)} .
  \end{equation}  
Simulation results given in \cite{zheng} shows that this estimator of variance performs reasonably well.

%\vspace{1cm} 
   
\section{Some well-known copula-based measures of association and their estimates in bivariate sample}
\paragraph{}
	There are several well-known measure of association in the bivariate sample that are some functional of the Copula. We here consider some of them. The population versions of Spearman's $\rho$, Kendall's $\tau$, and Gini's $\gamma$ are some such measures which are related to the copula by the following equations. For continuous random variables $X$ and $Y$ with copula $C$ we have
  \begin{equation}
     \rho=12\int\int_{I^2}[C(u,v)-uv]dudv,\\
  \end{equation}
  \begin{equation}   
     \tau=2\int^{1}_{0}\int^{1}_{0}\int^{v'}_{0}\int^{u'}_{0}[c(u,v)c(u',v')-c(u,v')c(u',v)]dudvdu'dv',\\
  \end{equation}
  \begin{equation}
     \gamma=4[\int^{1}_{0}C(u,1-u)du-\int^{1}_{0}[u-C(u,u)]du.\\
  \end{equation}  
   
    We can easily estimates above measures by their corresponding sample versions. In the next theorem, we present the corresponding estimators for a sample of size $n$ (we use Latin letters for the sample statistic ):\\
  \begin{theorem}
	 Let $C_n$ and $c_n$ denote the empirical copula and empirical copula frequency function for the sample $\{ (x_k,y_k) \}^n_k=1$. If $\hat{\rho}$, $t$ and $g$ denote, respectively the sample versions of Spearman's rho, Kendall's tau and Gini's gamma, then:
  \begin{equation}
     \hat{\rho}=\frac{12}{n^2-1}\sum^{n}_{i=1}\sum^{n}_{j=1}[C_n(\frac{i}{n},\frac{j}{n})-\frac{i}{n}.\frac{j}{n}].
  \end{equation}
  \begin{equation} t=\frac{2n}{n-1}\sum^{n}_{i=2}\sum^{i-1}_{p=1}\sum^{j-1}_{q=1}\sum^{n}_{j=2}[c_n(\frac{i}{n},\frac{j}{n})c_n(\frac{p}{n},\frac{q}{n})-c_n(\frac{i}{n},\frac{q}{n})c_n(\frac{j}{n},\frac{p}{n})].
  \end{equation}
  and
  \begin{equation}
   g=\frac{2n}{\left\lfloor n^2/2\right\rfloor}\{\sum^{n-1}_{i=1}[C_n(\frac{i}{n},1-\frac{i}{n})-\sum^{n}_{i=1}[\frac{i}{n}-C_n(\frac{i}{n},\frac{i}{n})].
  \end{equation}  
  \end{theorem}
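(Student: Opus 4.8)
The plan is to prove all three identities by translating the empirical‑copula objects into rank data and recognising each right‑hand side as the classical sample statistic (we take the usual textbook definitions: sample Spearman's rho as the Pearson correlation coefficient of the ranks, sample Kendall's tau as $(N_c-N_d)/\binom n2$ with $N_c,N_d$ the numbers of concordant and discordant pairs, and sample Gini's gamma as $\frac1{\lfloor n^2/2\rfloor}\big(\sum_k|R_k+S_k-n-1|-\sum_k|R_k-S_k|\big)$). Since $F_X$ and $F_Y$ are continuous, with probability one there are no ties; let $(R_k,S_k)$ be the pair of ranks of $(x_k,y_k)$, and let $\sigma\in S_n$ be the permutation with $\sigma(p)=q$ iff $(x_{(p)},y_{(q)})$ is a sample point. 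Then $c_n(p/n,q/n)=\tfrac1n\mathbf 1[\sigma(p)=q]$, $C_n(i/n,j/n)=\tfrac1n\#\{p\le i:\sigma(p)\le j\}$, and $\sum_{k=1}^n f(R_k,S_k)=\sum_{p=1}^n f(p,\sigma(p))$ for any $f$. This dictionary, together with $\sum_{i=1}^n i=n(n+1)/2$ and $\sum_{i=1}^n i^2=n(n+1)(2n+1)/6$, is all that is needed.

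\emph{Spearman's rho.} I would interchange the order of summation in $\sum_{i,j}C_n(i/n,j/n)$ to obtain $\tfrac1n\sum_{p}(n-p+1)(n-\sigma(p)+1)$, expand, and use the rank sums to collapse this to $\tfrac1n\sum_k R_kS_k$. Combining with $\sum_{i,j}(i/n)(j/n)=(n+1)^2/4$ and the normalising factor $12/(n^2-1)$ gives $\big(12\sum_k R_kS_k-3n(n+1)^2\big)/\big(n(n^2-1)\big)$, which is exactly the Pearson correlation of the ranks.

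\emph{Kendall's tau.} Here $c_n(i/n,j/n)\,c_n(p/n,q/n)=1/n^2$ precisely when $\sigma(i)=j$ and $\sigma(p)=q$, while $c_n(i/n,q/n)\,c_n(p/n,j/n)=1/n^2$ precisely when $\sigma(i)=q$ and $\sigma(p)=j$. Under the summation ranges $1\le p<i$ and $1\le q<j$, the first product then contributes once for each pair of observations in which the smaller $x$-rank also has the smaller $y$-rank (a concordant pair) and the second once for each discordant pair, so the quadruple sum equals $\tfrac1{n^2}(N_c-N_d)$; multiplying by $2n/(n-1)$ gives $(N_c-N_d)/\binom n2$. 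The careful bookkeeping of these index ranges — checking that each concordant and each discordant pair is counted exactly once and with the correct sign — is the one genuinely delicate step, and I would write it out explicitly.

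\emph{Gini's gamma.} Writing $1-i/n=(n-i)/n$, one has $\#\{i\in\{1,\dots,n-1\}:R_k\le i\le n-S_k\}=\max(0,\,n+1-R_k-S_k)$, so using $\max(0,x)=\tfrac12(x+|x|)$ and $\sum_k(n+1-R_k-S_k)=0$ gives $\sum_{i=1}^{n-1}C_n(i/n,1-i/n)=\tfrac1{2n}\sum_k|R_k+S_k-n-1|$; a parallel computation yields $\sum_{i=1}^n\big[i/n-C_n(i/n,i/n)\big]=\tfrac1{2n}\sum_k|R_k-S_k|$. Substituting and absorbing the constant $2n/\lfloor n^2/2\rfloor$ produces the stated form of the sample Gini's gamma. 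In all three parts the work is purely algebraic once the representation of $C_n$ and $c_n$ by the rank permutation is in hand; the only thing requiring attention beyond that is keeping track of the normalising constants, including the small‑sample factors $n^2-1$ and $n-1$.
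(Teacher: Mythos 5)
Your proposed proof is correct, and in fact it supplies something the paper itself omits: Theorem 4.1 is stated there without any proof (it is quoted from the standard empirical-copula literature), so there is no in-paper argument to compare against. Your rank dictionary $c_n(p/n,q/n)=\tfrac1n\mathbf 1[\sigma(p)=q]$, $C_n(i/n,j/n)=\tfrac1n\#\{p\le i:\sigma(p)\le j\}$ is exactly the right reduction, and the three computations check out: the interchange of summation gives $\sum_{i,j}C_n(i/n,j/n)=\tfrac1n\sum_k R_kS_k$ and hence the Pearson correlation of the ranks; the quadruple sum counts each concordant pair once through $\sigma(i)=j,\ \sigma(p)=q$ with $q<j$ and each discordant pair once through the second product, yielding $(N_c-N_d)/\binom n2$; and the identities $\sum_{i=1}^{n-1}C_n(i/n,1-i/n)=\tfrac1{2n}\sum_k|R_k+S_k-n-1|$ and $\sum_{i=1}^n[i/n-C_n(i/n,i/n)]=\tfrac1{2n}\sum_k|R_k-S_k|$ are verified correctly, giving the standard Gini cograduation index with the $\lfloor n^2/2\rfloor$ normalisation. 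One point worth making explicit when you write it up: you have (rightly) read the tau formula in its standard form, with summation ranges $\sum_{i=2}^n\sum_{j=2}^n\sum_{p=1}^{i-1}\sum_{q=1}^{j-1}$ and second product $c_n(i/n,q/n)\,c_n(p/n,j/n)$; as literally printed in the paper the inner range $q\le j-1$ precedes the sum over $j$ and the second factor is $c_n(j/n,p/n)$, which would also pick up diagonal terms $i=j$ with $\sigma(i)<i$ and would not reduce to $(N_c-N_d)/\binom n2$, so the printed version should be flagged as a misprint rather than proved as stated. With that caveat, and once the deferred concordant/discordant bookkeeping is written out as you indicate, the argument is complete.
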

  
\vspace{1cm}  

\section{A general approach for all rank based Test statistics : The RPM and the RPV}  
	Now we develop a general rank-based statistics "Rank-Position vector" (RPV) from which we can derive as a special class of functional all the rank-based measures of association including the above copula based measures also. Thus the study of this general statistic $RPV$ only is sufficient to cover all the rank-based statistics used in the inference. We will also prove the minimal sufficiency of our general statistics $RPV$. Let us first start with defining the Rank-based Permutation Matrix (RPM):

\subsection{The Rank-based Permutation Matrix (RPM)} 
	We consider the rank of the components in the bivariate sample and introduce the notion of the Rank-based Permutation Matrix (RPM) as follows:
	\begin{itemize}
\item
	Consider a bivariate sample of $n$ observations $ (X_1,Y_1), (X_2,Y_2), \cdots,(X_n,Y_n) $.
\item
	Order the $X$ and $Y$ components of the observations separately $ X_{(1)} \leq X_{(2)} \leq \cdots \leq X_{(n)}$ and 
	 $ Y_{(1)} \leq Y_{(2)} \leq \cdots \leq Y_{(n)}$. 
\item
	Assume $X_{t_i} = X_{(i)}$ and $Y_{t_i} = Y_{(s_i)}  \forall i $ .
\item
	Then the Rank-based Permutation Matrix (RPM) $R$ is defined as the $n \times n$ square     binary matrix with elements given by

\begin{eqnarray}
r_{ij} &= ~ 1 ~~~~~~~~~~ \mbox{if} ~~~ j = s_i  \\
     &= ~ 0 ~~~~~~~~~~ \mbox{otherwise}
\end{eqnarray}
\end{itemize}

\paragraph{}
Note that the $RPM$ is a permutation matrix. Under the perfect positive association the $RPM$ will be the Identity matrix and under the perfect negative association it will be the permutation matrix with all the back-diagonal entries being one :
$$
RPM~(under ~perfect ~positive ~association) = 
\begin{pmatrix}
	1 & 0 & 0 & \cdots & 0 \\
	0 & 1 & 0 & \cdots & 0 \\	
	0 & 0 & 1 & \cdots & 0 \\	
	\vdots & \vdots & \vdots & \ddots & \vdots \\
	0 & 0 & 0 & \cdots & 1 \\
\end{pmatrix}
$$ 
and
$$
RPM~(under ~perfect ~negative ~association) = 
\begin{pmatrix}
	0 & 0 & \cdots & 0 & 1\\
	0 & 0 & \cdots & 1 & 0 \\	
  \vdots & \vdots &  \ddots &\vdots & \vdots \\
	0 & 1 & \cdots & 0 & 0 \\	
	1 & 0 & \cdots & 0 & 0 \\	
\end{pmatrix}
$$ 

\paragraph{}
However under independence of $X$ and $Y$ , the sample $RPM$ can be any one of the all possible permutation matrix with equal probability. In other words the distribution of the sample $RPM$ under the assumption of independence is uniform over all possible permutation matrices. Since there are $n!$ possible permutation matrices of order $n \times n $, the corresponding probability for the sample $RPM$ to take any particular value is $\frac{1}{n!}$.And as the sample deviates from the assumption of independence the sample $RPM$ teds to exhibits certain particular patterns with higher probability and hence deviates from the uniformity of the distribution. Finally the sample distribution of $RPM$ becomes degenerate under the perfect association as indicated above. Therefore, we can use the notion of $RPM$ to discriminate the case of independence against all kinds of alternatives of non-independence.

\subsection{The Rank-Position Vector (RPV)}
 For an $ n \times n$ $RPM$, we now define a n-vector, the Rank-Position Vector ($RPV$), $\textbf{s} $ where the $i^{th}$ element of the vector will be the number of the column that have one in its $i^{th}$ row. That is, in our earlier notations 
$$
\textbf{s} = 
\begin{pmatrix}
	s_1\\
	s_2 \\	
  \vdots  \\
	s_n \\	
\end{pmatrix}
$$ 
Note that there is a one-to-one correspondence between the $RPM$ and the $RPV$ in the sense that one can be derived from the other. Similar to the $RPM$ this vector $\textbf{s}$ will also a relatively lower dimensional rank-based statistics whose distribution changes with the value of association between the two variables. From the null distribution of $RPM$ we can derive the null distribution of $RPV$ $\textbf{s}$ which is simply the uniform distribution over all possible permutations of $\{1,2,\cdots,n\}$	with each probability being $\frac{1}{n!}$. Hence the null distribution of the $RPV$ is also independent of the underlying distribution of the random variables. And as the samples deviates from independence its distribution also deviates from uniform to certain pattern finally becoming degenerate for the perfect association having value:
$$
RPV~(under ~perfect ~positive ~association) = 
\begin{pmatrix}
	1 \\
	2 \\	
	3 \\	
	\vdots \\
	n \\
\end{pmatrix}
$$ 
and
$$
RPV~(under ~perfect ~negative ~association) = 
\begin{pmatrix}
	n\\
	n-1 \\	
  \vdots  \\
	2 \\	
	1\\	
\end{pmatrix}
$$ 
\paragraph{}
	Since the null distribution of the $RPV$ is obtained we can now find the first two moments of the $RPV$ $\textbf{s}$  under the assumption of independence. Under above assumption each coordinates of $\textbf{s}$, say $s_i$ can takes values $1, 2, \cdots, n$ with probability $\frac{(n-1)!}{n!} = \frac{1}{n}$ each and hence each coordinate has expectation $\frac{n+1}{2}$. Thus
	
\begin{equation}
	E_0(RPV) = E_0(\textbf{s}) = \frac{(n+1)}{2}~\textbf{1} 
\end{equation}
where $E_0(.)$ denotes the expectation under the null hypothesis $H_0: r=0$ and $\textbf{1}$ be the $n$-vector of all entries one. Similarly the variance of any co-ordinates of the $RPV$ $\textbf{s}$ equals $\frac{n^2-1}{12}$ and the covariance between any two co-ordinates is $- \frac{n^2-1}{12(n-1)}$. Hence we get,
\begin{equation}
	V_0(RPV) = V_0(\textbf{s}) = 
	\begin{pmatrix}
	\frac{n^2-1}{12} & - \frac{n^2-1}{12(n-1)} & \cdots & - \frac{n^2-1}{12(n-1)} & - \frac{n^2-1}{12(n-1)}\\
	- \frac{n^2-1}{12(n-1)} & \frac{n^2-1}{12} & \cdots & - \frac{n^2-1}{12(n-1)} & - \frac{n^2-1}{12(n-1)} \\	
  \vdots & \vdots &  \ddots &\vdots & \vdots \\
	- \frac{n^2-1}{12(n-1)} & - \frac{n^2-1}{12(n-1)} & \cdots & \frac{n^2-1}{12} & - \frac{n^2-1}{12(n-1)} \\	
	- \frac{n^2-1}{12(n-1)} & - \frac{n^2-1}{12(n-1)} & \cdots & - \frac{n^2-1}{12(n-1)} & \frac{n^2-1}{12} \\	
\end{pmatrix}
\end{equation}
where $V_0(.)$ denotes the dispersion matrix under the null hypothesis $H_0: r=0$. Note that under independence the random vector $RPV$ has the equi-correlation structure with the value of the correlation coefficient between any two co-ordinates of the $RPV$ being  $- \frac{1}{(n-1)}$, the least possible intra-class correlation between the two random variables $(X,Y)$.

\subsection{Relationship of the RPV and the usual rank-based statistics}
\paragraph{}
It is interesting to note that any permutation invariant rank based test-statistics based on a bivariate sample can be seen as a function of the RPV. In particular if we consider the rank based measures of correlation from section 3.1 which showed closed relationship with empirical copula based on a bivariate sample, one should expect them to be expressed in terms of the $RVP$. These expressions are precisely stated in the following theorem:
\begin{theorem}
    Let $ s = ( s_1,s_2,\cdots,s_n )' $ denote the $RPV$ for the sample $\{ (x_k,y_k) \}_{k=1}^n$. If $r$, $t$ and $g$ denote, respectively the sample versions of Spearman's rho, Kendall's tau and Gini's gamma, then:
  \begin{equation}
     r = \sum^n_{k=1}k.s_k - [n(n+1)/2]^2.
  \end{equation}
  \begin{equation}
     t = 2\sum^n_{k=1}s_k - [n(n+1)].
  \end{equation}
  \begin{equation}
     g = 2\sum^n_{k=1}min\{(n+1)/2-k,(n+1)/2-s_k\}.
  \end{equation}
  \end{theorem}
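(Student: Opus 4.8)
The plan is to route everything through the empirical copula frequency $c_n$ and the empirical copula $C_n$ of Section~3.1, together with the Section~4 expressions for the sample $\rho$, $t$ and $g$ in terms of $C_n$ and $c_n$. The bridge is the observation that the $RPM$ \emph{is} a rescaling of the empirical copula frequency matrix: since, by construction of the $RPV$, the sample point whose $X$-rank is $\ell$ has $Y$-rank $s_\ell$, one has
\begin{equation}
  n\,c_n\!\left(\tfrac{i}{n},\tfrac{j}{n}\right)=r_{ij}=\mathbf{1}\{s_i=j\},
  \qquad
  n\,C_n\!\left(\tfrac{i}{n},\tfrac{j}{n}\right)=\#\{\,\ell\le i:\ s_\ell\le j\,\}.
\end{equation}
The first equality is immediate from the definition of $c_n$ and of the $RPM$; the second follows by summing the first over $p\le i$, $q\le j$ as in the $c_n$--$C_n$ relation of Section~3.1, using that a sample point satisfies $x\le x_{(i)}$ and $y\le y_{(j)}$ exactly when its $X$-rank is $\le i$ and its $Y$-rank is $\le j$ (continuity of the margins rules out ties, so the $RPM$ is a genuine permutation matrix). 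Equivalently, $\{(k,s_k)\}_{k=1}^n$ is precisely the set of $(\text{rank of }X,\ \text{rank of }Y)$ pairs of the sample, re-indexed by the $X$-rank, so the three statistics reduce to their classical rank forms and may be computed directly.

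\paragraph{}
For Spearman's $\rho$: substitute the second identity above into the Section~4 formula for the sample $\rho$ and interchange the order of summation, writing $\sum_{i,j}C_n(i/n,j/n)=\tfrac1n\sum_{\ell}(n+1-\ell)(n+1-s_\ell)$. Expanding the product, the constant term $n(n+1)^2$ is exactly cancelled by the two linear terms, each equal to $-\tfrac12 n(n+1)^2$ because $\sum_\ell s_\ell=\sum_\ell \ell=\tfrac{n(n+1)}2$; what survives is $\tfrac1n\sum_k k\,s_k$. Combining this with $\sum_{i,j}(i/n)(j/n)=\tfrac1{n^2}\bigl(\tfrac{n(n+1)}2\bigr)^2$ and clearing the normalising constant yields the displayed identity for $r$. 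This case is routine.

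\paragraph{}
For Kendall's $\tau$: substitute $n\,c_n(\cdot,\cdot)=\mathbf{1}\{s_\bullet=\cdot\}$ into the quadruple sum of Section~4. A product $c_n(i/n,j/n)\,c_n(p/n,q/n)$ is nonzero only when $j=s_i$ and $q=s_p$, and the ranges $p<i$, $q<j$ then force $s_p<s_i$, i.e.\ the pair $(i,p)$ is concordant; the subtracted product localises similarly onto discordant pairs. Hence the quadruple sum counts (number of concordant) $-$ (number of discordant) pairs of $\{(k,s_k)\}$, i.e.\ $\sum_{i<j}\operatorname{sgn}(s_j-s_i)$, and rearranging this count (treating the boundary rows/columns $i=1$ and $j=1$ separately) gives the $RPV$ form of $t$. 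For Gini's $\gamma$: substitute the second identity into the Section~4 formula for $g$, so that $C_n(i/n,1-i/n)$ and $C_n(i/n,i/n)$ become $\#\{\ell\le i:s_\ell\le n-i\}$ and $\#\{\ell\le i:s_\ell\le i\}$; interchange sums again and use $\sum_k k=\sum_k s_k$ together with the pointwise identities $\max\{a,b\}=\tfrac12(a+b+|a-b|)$, $\min\{a,b\}=\tfrac12(a+b-|a-b|)$ and $\min\{\tfrac{n+1}2-k,\tfrac{n+1}2-s_k\}=\tfrac{n+1}2-\max\{k,s_k\}$ to collapse the two empirical-copula sums into $\sum_k\min\{\tfrac{n+1}2-k,\tfrac{n+1}2-s_k\}$, up to the constant $2n/\lfloor n^2/2\rfloor$.

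\paragraph{}
The step I expect to be the main obstacle is the bookkeeping for Kendall's $\tau$: one must carry the four nested index ranges through the $RPV$ substitution without error, verify that exactly the concordant configurations of the $RPV$ survive in the first product and exactly the discordant ones in the second, and then reassemble the concordant-minus-discordant count into the stated closed form. The Gini case is also somewhat delicate because of the floor $\lfloor n^2/2\rfloor$ and the split of the unit square along the anti-diagonal, which forces one to keep track of the two sums $\sum_k|k+s_k-n-1|$ and $\sum_k|k-s_k|$ and to separate the parities of $n$ when identifying constants; Spearman's $\rho$, by contrast, is a direct computation once the bridge identity in the first paragraph is in hand.
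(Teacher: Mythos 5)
The paper states this theorem without any proof, so your proposal can only be judged on whether its own steps deliver the displayed formulas, and they do not. Your bridge identities are fine: $n\,c_n(i/n,j/n)=\mathbf{1}\{s_i=j\}$ and $n\,C_n(i/n,j/n)=\#\{\ell\le i:\,s_\ell\le j\}$, and the Spearman computation built on them is essentially sound, but it yields $\hat{\rho}=\frac{12}{n(n^2-1)}\bigl(\sum_k k\,s_k-\frac{1}{n}[n(n+1)/2]^2\bigr)$; no amount of ``clearing the normalising constant'' turns this into the displayed $r=\sum_k k\,s_k-[n(n+1)/2]^2$, whose subtracted constant is off by a factor of $n$, so even here you are proving (correctly) something other than what you claim to prove, and you should say so explicitly.

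The genuine failures are Kendall and Gini. For Kendall you correctly localise the quadruple sum to concordant minus discordant pairs, i.e.\ to $\sum_{i<j}\operatorname{sgn}(s_j-s_i)$, but the asserted final step ``rearranging this count gives the RPV form of $t$'' cannot be carried out: since $\mathbf{s}$ is a permutation of $\{1,\dots,n\}$ one has $\sum_k s_k=n(n+1)/2$ identically, so the displayed right-hand side $2\sum_k s_k-n(n+1)$ is zero for every sample, whereas the concordance count is not (it equals $\binom{n}{2}$ when $s_k=k$). No bookkeeping of boundary rows bridges a quantity that depends on the whole permutation with one that is constant. For Gini, pushing your own identities through the Section~4 formula gives $\sum_{i=1}^{n-1}C_n(i/n,1-i/n)=\frac{1}{2n}\sum_k|n+1-k-s_k|$ and $\sum_{i=1}^{n}\bigl[\frac{i}{n}-C_n(i/n,i/n)\bigr]=\frac{1}{2n}\sum_k|k-s_k|$, hence $g=\frac{1}{\lfloor n^2/2\rfloor}\bigl(\sum_k|n+1-k-s_k|-\sum_k|k-s_k|\bigr)$, while the theorem's expression satisfies $2\sum_k\min\{\tfrac{n+1}{2}-k,\tfrac{n+1}{2}-s_k\}=-\sum_k|k-s_k|$: the anti-diagonal sum does not ``collapse'' away, and the two quantities are not equal (nor proportional) in general. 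So the proposal asserts reductions the algebra does not support; a correct write-up would either fix the target identities (stating the normalisations in which they hold) or prove the reductions you actually obtain and record the discrepancy with the stated theorem.
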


  \begin{theorem}
  	Any permutation invariant rank-based statistics based on a bivariate sample \\$(X_1,Y_1), (X_2,Y_2), \cdots, (X_n,Y_n)$ will be a function of the $RPV$ of that sample.
  \end{theorem}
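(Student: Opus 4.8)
The plan is to turn this into a statement about permutations: to show that the RPV is a \emph{complete invariant} for the operation of relabelling the sample points, so that any statistic insensitive both to the marginal scales (``rank-based'') and to relabelling (``permutation invariant'') is forced to factor through the RPV. First I would fix the setup. Since $F$ and $G$ are continuous, with probability one there are no ties, and the sample determines the two rank vectors $a=(a_1,\dots,a_n)$ and $b=(b_1,\dots,b_n)$, where $a_i$ is the rank of $X_i$ among $X_1,\dots,X_n$ and $b_i$ that of $Y_i$ among $Y_1,\dots,Y_n$; both are permutations of $\{1,\dots,n\}$. By definition a rank-based statistic is a function $T=T(a,b)$. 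Relabelling the observations by a permutation $\tau$ of $\{1,\dots,n\}$ replaces observation $i$ by the old observation $\tau(i)$, hence replaces $(a,b)$ by $(a\circ\tau,\,b\circ\tau)$, where $(a\circ\tau)(i)=a_{\tau(i)}$; so ``permutation invariant'' means exactly that $T(a\circ\tau,\,b\circ\tau)=T(a,b)$ for every such $\tau$.

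\paragraph{}
Next I would locate the RPV inside this picture. From the construction in Section 5.1, $t_i=a^{-1}(i)$ is the index of the observation carrying the $i$-th smallest $X$-value and $s_i=b_{t_i}$, so the RPV is the composed permutation $\mathbf{s}=b\circ a^{-1}$. The key lemma is then: two pairs of rank vectors $(a,b)$ and $(a',b')$ yield the same RPV if and only if $(a',b')=(a\circ\tau,\,b\circ\tau)$ for some permutation $\tau$. The ``if'' direction is immediate, since $(b\circ\tau)\circ(a\circ\tau)^{-1}=b\circ a^{-1}$. For ``only if'', assume $b'\circ(a')^{-1}=b\circ a^{-1}$ and take $\tau=a^{-1}\circ a'$; then $a\circ\tau=a'$ and $b\circ\tau=b\circ a^{-1}\circ a'=b'\circ(a')^{-1}\circ a'=b'$. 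Equivalently, the level sets of the map $(a,b)\mapsto b\circ a^{-1}$ are precisely the orbits of the diagonal relabelling action $\tau\cdot(a,b)=(a\circ\tau,\,b\circ\tau)$.

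\paragraph{}
Combining the two steps finishes the proof: by the invariance identity $T$ is constant on each orbit of the relabelling action, and by the lemma those orbits coincide with the sets on which the RPV is constant, so $T(a,b)=\Phi(b\circ a^{-1})=\Phi(\mathbf{s})$ for a well-defined function $\Phi$ of the RPV alone; the preceding theorem is then the special case in which $\Phi$ is read off from the explicit formulas given there. I would also record that the description is sharp: the RPV is itself permutation-invariant and rank-based, and every permutation arises as an RPV (take $a=\mathrm{id}$, $b=\mathbf{s}$), so ``function of the RPV'' cannot be weakened. The only point demanding care is purely notational --- keeping the composition order straight ($a\circ\tau$ versus $\tau\circ a$, and the direction in which the relabelling acts) so that the cancellation $b\circ a^{-1}\circ a'=b'$ is genuinely valid; beyond that the argument is routine.
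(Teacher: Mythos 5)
Your proposal is correct and is essentially the same argument as the paper's: the paper applies permutation invariance with the permutation that sorts the sample by $X$-rank, turning the rank pairs into $(1,s_1),\dots,(n,s_n)$, which is exactly your reduction of $(a,b)$ to $(\mathrm{id},\,b\circ a^{-1})$ via $\tau=a^{-1}$. Your orbit/complete-invariant formulation and the converse direction (same RPV implies same orbit) just make the well-definedness of $\Phi$ explicit; the underlying mechanism is identical.
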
 
\begin{proof}
   Let $R_{X_i}$ and $R_{Y_i}$ be the rank of the $i^{th}$ observation $X_i$ and $Y_i$ respectively $\forall i$ and consider any rank-based statistic $T = f((R_{X_1},R_{Y_1}),\cdots,(R_{X_n},R_{Y_n}))$ where $f(.)$ is a permutation invariant function of its arguments. Next note that from the definition of the $RPV = \textbf{s} = (s_1,s_2,\cdots,s_n)'$ it follows that [se eq.() and () above] if for any $i$ the $X$-rank $R_{X_i}=j$ then the corresponding $Y$-rank will be $R_{Y_i}= s_j$. Also as $i$ ranges over $\{1,2,\cdots,n\}$ the corresponding $X$-rank $R_{X_i}$ also ranges over ${1,2,\cdots,n}$. Thus using the permutation invariance of the function $f$, we obtain
  \begin{equation}
  T = f((R_{X_1},R_{Y_1}),\cdots,(R_{X_n},R_{Y_n})) = f((1,s_1),\cdots,(n,s_n))
  \end{equation}
  which is only a function of the statistic $(s_1,s_2,\cdots,s_n)' = RPV$
  \end{proof}
\paragraph{}
	Note that the restriction of the permutation invariance is not of much problematic because we usually consider the permutation invariant rank-based statistics only to get mare information and less variance, e.g., usual linear rank-based statistics. However, in view of the above theorem, our general rank-based statistic $RPV$ gives rise to a more general class of nonparametric rank-based statistics for the bivariate random variables including all the usual permutation invariant rank-statistics and many more rank-statistics those may be very efficient in inference. So, we can use several suitable statistics from this class of statistics in the non-parametric inference about the association between the two random variables. We already seen above that some of the usual measure of association are nothing but some simple functional of $RPV$. Also we can develop several other measures also from this class of statistics based on $RPV$. However here we will develop some useful test statistics for testing the independence in the bivariate sample.

\vspace{1cm} 

\section{Some nonparametric tests of independence in bivariate sample based on RPM and RPV}

\paragraph{}
	Now we consider the problem of testing for independence between two components of a bivariate sample. We try to develop some tests from several logical inference. There are several well known parametric test. But in all those cases we need to specify the underlying distribution of the variables. However in practice we usually do not have any idea about the underlying distribution so that it is difficult to implement those parametric tests. So we here propose some non-parametric tests. 	For this purpose we consider the Rank-based Permutation Matrix (RPM) and the Rank-Position Vector (RPV) in the bivariate sample and try to develop some test statistics for independence based on them. We already argued that the distribution of $RPM$ and $RPV$ changes with the change in the measure of association between the two random variables in the sample and so we can use them or some suitable functional of them to test for independence.
\paragraph{}
	One major advantage in the use of $RPM$ or $RPV$ in testing for independence is their non-parametric nature. As discussed above the null distributions of both the sample $RPM$ and the sample $RPV$ do not depend on the underlying distribution of the sample $( X, Y)$. Since for testing purpose we only need the null distribution of the test statistics ( to compute the critical values of the test ), the test based on $RPM$, $RPV$ or any functional of them will be independent of the underlying distribution of the sample $(X,Y)$. However, since there is no clear partial ordering in $\Re^n$ for $n>1$ and the $n \times n ~(n>1)$ $RPM$ belongs to the space isomorphic to $\Re^{n \times n}$, the comparison of two $RPM$ directly is quite difficult. Similarly, the comparison of two $RPV$ directly is quite difficult as the $n$-vector $RPV$ belongs to the space isomorphic to $\Re^n$. So we consider some one dimensional functional of $RPM$ and $RPV$ that preserve the property that it's distribution changes as the dependence between the two random observable changes from independence to perfect association. And so we can use this one dimensional functional of $RPM$ and $RPV$ to form the test of independence in the bivariate sample. Note that, as argued above, any such test will also be independent of the underlying distribution of the samples. Below we consider some such functional of $RPM$ and $RPV$ that also have some intuitive justifications.
	
\subsection{Trace of the RPM}
\paragraph{}
	A common practice in the analysis of a matrix is to use the eigenvalue decomposition of the matrix. Here also we may consider the set of $n$ eigenvalues of the $n \times n$ $RPM$ and then consider a one dimensional functional of this set of eigenvalues. The simplest choice for that is the sum of the eigenvalues which leads to the trace of the $RPM$. So we have obtain a one dimensional test statistic
	$$
	T_1 = Trace(RPM) = \mbox{ Sum of the eigenvalues of the RPM }
	$$
	Now we will be able to form a test using the statistics $T_1$ if it preserve the property that its null distribution differ significantly from the distribution under alternative. Note that there may be some alternative for which the distribution of $T_1$ under null differ significantly from that under the alternative and there may be some other alternative for which this property does not holds. In such a case the test based on $T_1$ will not be uniformly powerful, rather it will be locally powerful against the alternatives satisfying the above requirement. However, as in the case of $RPM$, the distribution of $T_1 = Trace (RPM)$ will also depends only on the value of association between the two variables $(X,Y)$ and hence it's null distribution will be independent of the underlying distribution of the random variables $(X,Y)$. So we simulate the distribution of $T_1$ for various sample sizes $n$ from bivariate normal with means $0$, variances $1$ and correlation coefficient $r$ for various values of $r$ in $[-1, 1]$. And plot the corresponding sample mean of those distributions in the following figure \ref{fig:trc}. 
	
\begin{figure}[h]
	\centering
		\includegraphics[width=79mm, height=59mm]{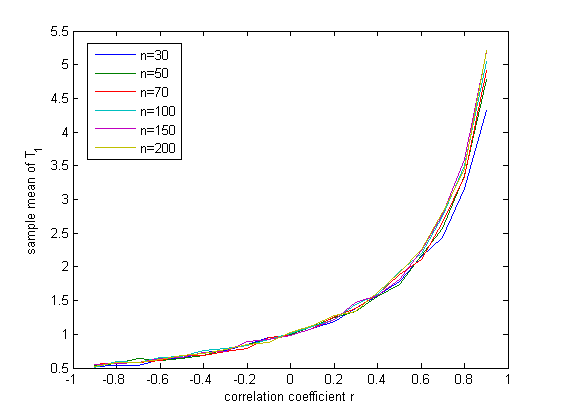}
	\caption{Plot of the sample mean of the statistics $T_1$ for various sample size $n$}
	\label{fig:trc}
\end{figure}

\paragraph{}
	From the figure \ref{fig:trc} it is clear that the distribution of $T_1$ changes as the value of correlation coefficient $r$, a measure of association changes. So we can use $T_1$ to test for independence, that is, for the null hypothesis $H_0 : r = 0$ against the alternative $H_1 : r \neq 0$. Also we note that the rate of change in the mean of $T_1$ is higher for positive values of $r$ than that for negative values of $r$. Hence we expect that the power of our test based on the trace $T_1$ will have much more power for the alternatives of positive association and it may be locally best for such alternatives. we examines the performances of this test against several alternatives in the following sectin 6 .

\paragraph{}
	Now we want to find the null distribution of $T_1$ that will be required to compute the critical values for the test of independence based on $T_1$.Though we know the null distribution of $RPM$ is known, it is quite difficult to obtain the null distribution of $T_1 = Trace(RPM)$. So we here propose to use the simulated cut-off for the test. The table \ref{tab:cv1} shows the simulated quantiles  for the null distribution of $T_1$. Using this we formulate our both sided test for independence as follows : 
	\\
	Reject $H_0 : r = 0$ ( independence ) in favour of the alternative $H_1 : r \neq 0 $ ( non-independence ) at $100\alpha$\% level of significance (e.g. $\alpha = 0.05$ )if 
	\begin{center} $T_1 \leq C1(n,\frac{\alpha}{2})$       or          $T_1 \geq C1(n,1-\frac{\alpha}{2}) $
	\end{center}
	where $ C1(n,\alpha) $ is the $100\alpha$\% quantile of the null distribution of $T_1$ for sample size $n$ that can be obtained from table \ref{tab:cv1}. Similarly we can form the one sided test for independence also.

\begin{table}[h]
	\centering
	\caption{The simulated quantiles of the sample distribution of $T_1$ under null $H_0 : r=0$ for different sample size $n$ }
		\begin{tabular}{|c| c c c c c c c c c|} \hline
		n    &  2.5\% & 5\% & 10\% &  25\%  &  50\% & 75\% &  90\% & 95\% & 97.5\% \\ \hline
	  30   &  0  &   0  &   0   &  0   &  1  &   2   &  2  &   3  &   3\\
    50   &  0  &   0  &   0   &  0   &  1  &   2   &  2  &   3  &   3\\
    70   &  0  &   0  &   0   &  0   &  1  &   2   &  2  &   3  &   3\\
   100   &  0  &   0  &   0   &  0   &  1  &   2   &  2  &   3  &   3\\
   150   &  0  &   0  &   0   &  0   &  1  &   2   &  2  &   3  &   3\\
   200   &  0  &   0  &   0   &  0   &  1  &   2   &  2  &   3  &   3\\ \hline
		\end{tabular}
	\label{tab:cv1}
\end{table}

\subsection{The first two moments of the RPV}
\paragraph{}
	We now try to develop some test based on the general statistic $RPV$. We already derived the first two moments of $RPV$ under null. So it will be intuitive to compare the sample mean of the $RPV$ with the theoretical one. However from one sample we will get only one $RPV$ and we checked that only one value of the $RPV$ is not sufficient to compare the scenario with different value of correlation. So we propose an way to get an estimate of the mean and the covariance of the $RPV$ based on only one sample. Suppose we have a sample of size $n$ $(X_1,Y_1),\cdots,(X_n,Y_n)$ . We choose an integer $k$ such that $n'=\frac{n}{k}$ is also an integer and then break the sample into $n'=\frac{n}{k}$ sub-samples each of size $k$. Now we compute the $RPV$ based on each sub-samples, say $RPV_1$,$RPV_2$,$\cdots$, $RPV_{n'}$. Note that each of the above $RPS_i$'s are computed from the $i^{th}$ subsamples of size $k$. Now we can estimate the mean and variance of the $RPV$ as 
	
\begin{equation}
	\hat{E}(RPV)= \frac{1}{n'}\sum_{i=1}^{n'}~RPV_i
\end{equation}
	and
	
\begin{equation}
	\hat{V}(RPV)= \frac{1}{n'}\sum_{i=1}^{n'}~(RPV_i-\hat{E}(RPV))(RPV_i-\hat{E}(RPV))'
\end{equation}

\paragraph{}
	Now we propose a test statistics based on the above estimated values. Note that the intuitive choice is to look for the Mahalonabis distance based on the estimated mean and variances. But we observe that this cannot distinguish the cases with different correlations so that it cannot be used to form the test for independence. The main problem in this approach is that the estimated covariance matrix becomes ill-conditioned because the original dispersion matrix of the $RPV$ is singular. However we here made an interesting observation on the estimated covariance matrix. Here we observed the diagonal entries of the estimated covariance matrix $\hat{V}(RPV)$ that is the variances of the components of $RPV$. The observation is, as the correlation increases from $-1$ to $1$ the elements of first half of the diagonal elements of $\hat{V}(RPV)$ increases and the elements of second half of the diagonal elements of $\hat{V}(RPV)$ decreases. For zero correlation all the variances becomes exactly equal confirming our previous expression of the dispersion of the variance. So if we consider the variance of the diagonal elements of $\hat{V}(RPV)$ it will be closed to zero under independence and gradually increases as the correlation moves away from $0$ to $1$ or $-1$ [fig \ref{fig:vd}]. So we can use this observation to form a test of independence using the statistics
	\begin{equation}
	 T_2 = Variance~of~the~diagonal~elements~of~\hat{V}(RPV)~.
	\end{equation}
	 
\begin{figure}[h]
	\centering
		\includegraphics[width=79mm, height=60mm]{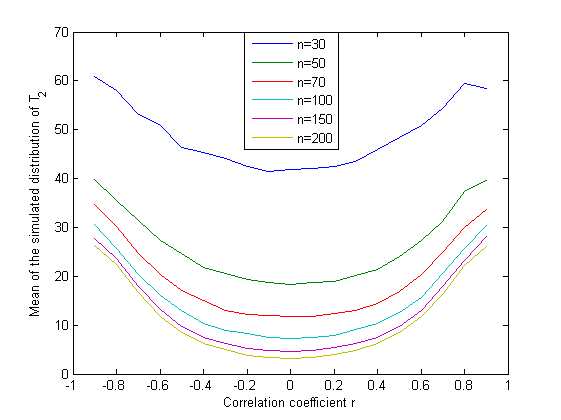}
	\caption{Plot of the sample mean of the statistics $T_2$ for various sample size $n$}
	\label{fig:vd}
\end{figure}

\paragraph{}
	Next we need to find the null distribution of $T_2$ to obtain the critical values for the test of independence based on $T_2$. However the explicit form of the distribution of $T_2$ is difficult to derive. So we here give the simulated quantiles of the null distribution of $T_2$ in the table \ref{tab:cv2}. Note that since the statistic $T_2$ is a function of $RPV$ its null distribution is independent of the underlying distribution of the random sample. Thus our both sided test for independence will be as follows :
	\\
	Reject $H_0 : r = 0$ ( independence ) in favour of the alternative $H_1 : r \neq 0 $ ( non-independence ) at $100\alpha$\% level of significance (e.g. $\alpha = 0.05$ )if 
	\begin{center} $T_2  \geq C2(n,1-\alpha) $
	\end{center}
	where $ C2(n,\alpha) $ is the $100\alpha$\% quantile of the null distribution of $T_2$ for sample size $n$ that can be obtained from table \ref{tab:cv2}.

\begin{table}[h]
	\centering
	\caption{The simulated quantiles of the sample distribution of $T_1$ under null $H_0 : r=0$ for different sample size $n$ }
		\begin{tabular}{|c| c c c c c c c c c|} \hline
		n    &  2.5\% & 5\% & 10\% &  25\%  &  50\% & 75\% &  90\% & 95\% & 97.5\% \\ \hline
		
	 30  & 13.7481  & 16.8704  & 21.1963  & 29.6296  & 40.7519  & 52.6963  & 64.3364  & 71.2889  & 77.2722\\
   50  &  6.5134  &  7.8796  &  9.6220  & 12.9756  & 17.4582  & 22.8747  & 28.5947  & 32.3536  & 35.8485\\
   70  &  3.7700  &  4.6133  &  5.7350  &  7.9646  & 10.9385  & 14.5052  & 18.2253  & 20.7343  & 23.0242\\
  100  &  2.2974  &  2.8423  &  3.5513  &  4.9707  &  6.9196  &  9.2624  & 11.7565  & 13.4325  & 15.0129\\
  150  &  1.4073  &  1.7225  &  2.1500  &  3.0254  &  4.2590  &  5.7684  &  7.3336  &  8.4031  &  9.3331\\
  200  &  1.0133  &  1.2389  &  1.5465  &  2.1758  &  3.0563  &  4.1376  &  5.3003  &  6.0786  &  6.7942 \\  \hline
		\end{tabular}
	\label{tab:cv2}
\end{table} 
	
\subsection{The Discrete Fourier Transformation of the RPV}
\paragraph{}
	Now we consider the discrete Fourier transformation of the $RPV$ and try to develop some test of independence based on that. Suppose $f_1, f_2,\cdots, f_n$ be the Discrete Fourier coefficient for the $n$-vector $RPV$ based on a bivariate sample of size $n$ where each $f_i$ possible be a complex number. Note that when the two random variables will be independent the uncertainty present in the Fourier coefficients $f_i$'s will be maximum. As the association between the two variables increases towards perfect association the uncertainty will decrease and takes its minimum value under the perfect association. So we can use some suitable measure of this uncertainty in the Fourier coefficients to test for independence. Here we will consider two common measure of uncertainty, namely the entropy of the normalized amplitude of the Fourier coefficients($T_3$) and the sum of squares of the Fourier coefficients ($T_4$). 
	\paragraph{}
	For defining these statistics, let us assume that $A_i$ denotes the amplitude (absolute value) of the $i^{th}$ Fourier coefficient $f_i$ and $A_i^*$ denotes the corresponding normalized amplitude for all $i$, i.e., $A_i^* = \frac{A_i}{\sum_{j=1}^n~A_j}$. Then we have
	
\begin{equation}
	T_3 = - \sum_{i=1}^n~~A_i^* \log(A_i^*)
\end{equation}
  and
\begin{equation}
	T_4 = \sum_{i=1}^n~~A_i
\end{equation}

These statistics being a measure of uncertainty also behave exactly similar to that explained above for the uncertainty with respect to the change in the association between the two random variables [ Figure \ref{fig:fft} ]. So we can use them to test for the null hypothesis of independence.

\begin{figure}[ht]
\centering
\includegraphics[width=79mm, height=60mm] {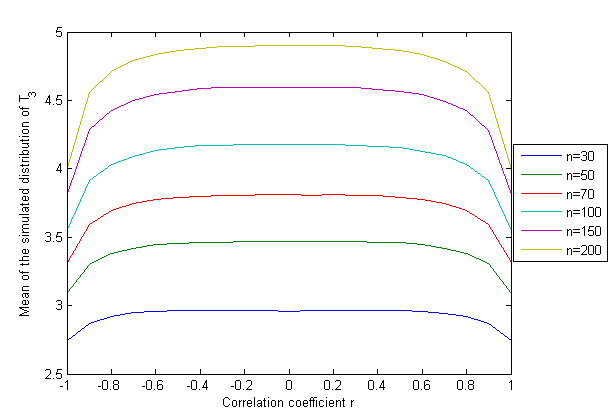}
\includegraphics[width=79mm, height=60mm] {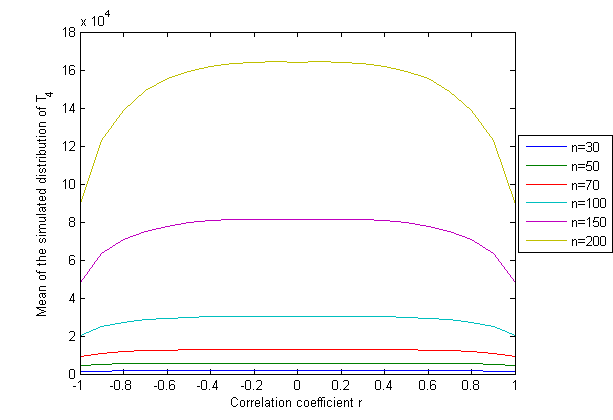}
\caption{Plot of the sample mean of the statistics $T_3$ and $T_4$ for various sample size $n$}
\label{fig:fft}
\end{figure}

 Note that the null distribution of these statistics $T_3$ and $T_4$ are independent of the underlying distribution of the random variables but the exact form of the null distribution is very difficult to find out. So in the table \ref{tab:cv3} and \ref{tab:cv4} we give the simulated quantiles of the distribution of $T_3$ and $T_4$ respectively for some sample sizes. We can use this simulated cut-off to form our test for independence against the both sided alternative as follows:
	\\
	Reject $H_0 : r = 0$ ( independence ) in favour of the alternative $H_1 : r \neq 0 $ ( non-independence ) at $100\alpha$\% level of significance (e.g. $\alpha = 0.05$ )if 
	\begin{center} $T_3  \leq C3(n,\alpha) $
	\end{center}
	where $ C3(n,\alpha) $ is the $100\alpha$\% quantile of the null distribution of $T_3$ for sample size $n$ that can be obtained from table \ref{tab:cv3}.Similarly for $T_4$ also we reject if
	\begin{center} $T_4  \leq C4(n,\alpha) $
	\end{center}
	where $ C4(n,\alpha) $ is the $100\alpha$\% quantile of the null distribution of $T_4$ for sample size $n$ obtained from table \ref{tab:cv4}.

\begin{table}[h]
	\centering
	\caption{The simulated quantiles of the sample distribution of $T_3$ under null $H_0 : r=0$ for different sample size $n$ }
		\begin{tabular}{|c| c c c c c c c c c|} \hline
		n    &  2.5\% & 5\% & 10\% &  25\%  &  50\% & 75\% &  90\% & 95\% & 97.5\% \\ \hline
		
	 30 &    2.8451  &  2.8682 &   2.8934 &   2.9314  &  2.9688  &  3.0015 &   3.0263  &  3.0395 &   3.0501\\
   50 &    3.3765  &  3.3933 &   3.4120 &   3.4412  &  3.4709  &  3.4973 &   3.5182  &  3.5298 &   3.5391\\
   70 &    3.7309  &  3.7455 &   3.7610 &   3.7851  &  3.8106  &  3.8337 &   3.8524  &  3.8629 &   3.8715\\
  100 &    4.1112  &  4.1230 &   4.1357 &   4.1559  &  4.1768  &  4.1964 &   4.2126  &  4.2218 &   4.2294\\
  150 &    4.5453  &  4.5549 &   4.5649  &  4.5813  &  4.5985  &  4.6145 &   4.6280  &  4.6357 &   4.6422\\
  200 &    4.8548  &  4.8626 &   4.8712  &  4.8851  &  4.8999  &  4.9139 &   4.9259  &  4.9328 &   4.9384 \\  \hline
		\end{tabular}
	\label{tab:cv3}
\end{table} 
	
	\begin{table}[h]
	\centering
	\caption{The simulated quantiles of the sample distribution of $T_4$ under null $H_0 : r=0$ for different sample size $n$ }
		\begin{tabular}{|c| c c c c c c c c c|} \hline
		n    &  2.5\% & 5\% & 10\% &  25\%  &  50\% & 75\% &  90\% & 95\% & 97.5\% \\ \hline
		
30 &	1568.8	&  1590.1 &	1612.6&	1647.2&	1681.8&	1712.6&	1736.7&	1749.8&	1760.2	\\
50 &	5415.7	& 5468.6 	 & 5527.6&	5621.5 &	5718.0	&  5805.0 & 	5875.6&	5915.2 &	5946.5 	\\
70 &  12268.8&	12372.4&	12485.0&	12663.3&	12853.5&	13028.8&	13172.2&	13254.2&	13322.2 	\\
100 &	29273.0&	29479.6&	29708.1&	30066.8&	30449.7 &	30807.4&	31108.0 &	31278.7&	31420.9	\\
150 &	78899.8 &	79354.3&	79854.1&	80661.1&	81516.0 &	82323.2&	83015.8&	83411.0&	83745.9 	\\
200 &	159782.1&	160565.7&	161425.6&	162853.1&	164374.1 &	165826.1&	167072.8&	167803.4&	168410.5 		\\  \hline
		\end{tabular}
	\label{tab:cv4}
\end{table}

\subsection{The Mel-frequency cepstrum coefficient of RPV}
\paragraph{}
	Instead of the Discrete Fourier Transformation of the $RPV$ as discussed in the preceding section we can also consider the Mel-frequency cepstrum coefficient of the $RPV$. And then we look at the uncertainty present in the corresponding set of the Mel-frequency cepstrum coefficients. Here also we consider the entropy and the sum of squares of the Mel-frequency cepstrum coefficients (MFCCs) of the $RPV$ to develop test for independence. First note that the MFCCs of the $RPV$ are derived as follows:
\begin{enumerate}
	\item Take the Fourier transform of (a windowed excerpt of) the $RPV$.
	\item Map the powers of the spectrum obtained above onto the mel scale, using triangular overlapping windows.
	\item Take the logs of the powers at each of the mel frequencies.
	\item Take the discrete cosine transform of the list of mel log powers, as if it were a signal.
	\item The MFCCs are the amplitudes of the resulting spectrum.
\end{enumerate}
Suppose $M_1, M_2,\cdots,M_n$ be the Mel-frequency cepstrum coefficients for the $n$-vector $RPV$ based on a bivariate sample of size $n$. Let $M_i^*$ be the normalized absolute coefficients corresponding to the $i^{th}$ MFCC $M_i$ for all $i$, i.e., $M_i^* = \frac{|M_i|}{\sum_{j=1}^n~|M_j|}$. Then the two test staistics corresponding to the entropy and sum of squares are respectivly given by 

\begin{equation}
	T_5 = - \sum_{i=1}^n~~M_i^* \log(M_i^*)
\end{equation}
  and
\begin{equation}
	T_6 = \sum_{i=1}^n~~M_i^2
\end{equation}

\begin{figure}[ht]
\centering
\includegraphics[width=79mm, height=60mm] {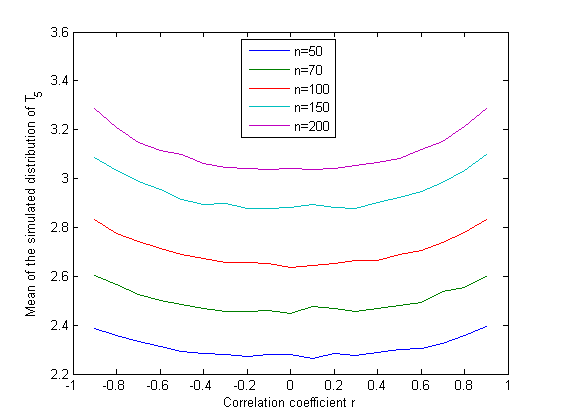}
\includegraphics[width=79mm, height=60mm] {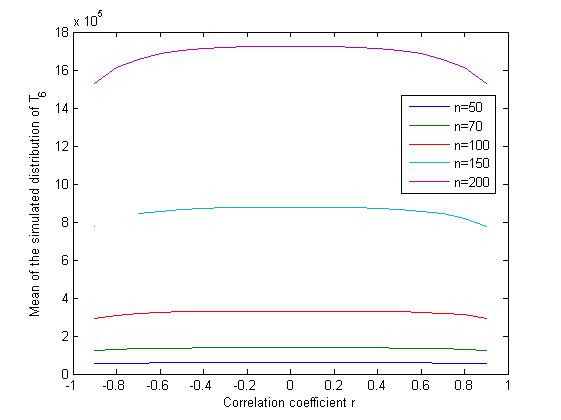}
\caption{Plot of the sample mean of the statistics $T_5$ and $T_6$ for various sample size $n$}
\label{fig:mfcc}
\end{figure}

We here made an interesting observation regarding the relationship between the change in the above test statistics $T_5$ and $T_6$ based on the MFCCs and the change in the measure of association in the bivariate sample. The statistic $T_6$ is maximum under independence like the previous case with the Fourier coefficients and gradually decreases as the sample deviates from independence. But on the contrary, in this case the entropy based statistics $T_5$ is minimum under the null hypothesis of independence and gradually increases as the association between the two variables increases [ Figure \ref{fig:mfcc} ]. However the distribution of the statistics $T_5$ and $T_6$ under the assumption of independence will also be independent of the underlying distribution of the bivariate samples. But due to the unavailability of the exact form of the null distribution of $T_5$ and $T_6$, here also we need to use the simulated cut-off given in the table \ref{tab:cv5} and \ref{tab:cv6}. Therefore the test of independence based on $T_5$ against the both sided alternative will be given by :
	\\
	Reject $H_0 : r = 0$ ( independence ) in favour of the alternative $H_1 : r \neq 0 $ ( non-independence ) at $100\alpha$\% level of significance (e.g. $\alpha = 0.05$ )if 
	\begin{center} $T_5  \geq C5(n,1-\alpha) $
	\end{center}
where $ C5(n,\alpha) $ is the $100\alpha$\% quantile of the null distribution of $T_5$ for sample size $n$ that can be obtained from table \ref{tab:cv5}. Similarly the test based on $T_6$ will reject $H_0 : r = 0$ ( independence ) against the both sided alternative if
	\begin{center} $T_6  \leq C6(n,\alpha) $
	\end{center}
where $ C6(n,\alpha) $ is the $100\alpha$\% quantile of the null distribution of $T_6$ for sample size $n$ obtained from table \ref{tab:cv6}.

\begin{table}[h]
	\centering
	\caption{The simulated quantiles of the sample distribution of $T_5$ under null $H_0 : r=0$ for different sample size $n$ }
		\begin{tabular}{|c| c c c c c c c c c|} \hline
		n    &  2.5\% & 5\% & 10\% &  25\%  &  50\% & 75\% &  90\% & 95\% & 97.5\% \\ \hline
		
	 30 &    1.6912  &  1.7329 &   1.7885 &   1.8909 &   2.0166 &   2.1524 &   2.2767 &   2.3497 &   2.4131\\
   50 &    1.9305  &  1.9809 &   2.0421 &   2.1478 &   2.2723 &   2.4024 &   2.5246 &   2.5980 &   2.6599\\
   70 &    2.1059  &  2.1592 &   2.2217 &   2.3275 &   2.4495 &   2.5756 &   2.6941 &   2.7668 &   2.8286\\
  100 &    2.3095  &  2.3607 &   2.4224 &   2.5263 &   2.6451 &   2.7673 &   2.8783 &   2.9461 &   3.0051\\
  150 &    2.5522  &  2.6018 &   2.6609 &   2.7608 &   2.8740 &   2.9913 &   3.0994 &   3.1651 &   3.2238\\
  200 &    2.7187  &  2.7669 &   2.8251 &   2.9239 &   3.0338 &   3.1464 &   3.2497 &   3.3138 &   3.3709\\  \hline
		\end{tabular}
	\label{tab:cv5}
\end{table} 
	
	\begin{table}[h]
	\centering
	\caption{The simulated quantiles of the sample distribution of $T_6$ under null $H_0 : r=0$ for different sample size $n$ }
		\begin{tabular}{|c| c c c c c c c c c|} \hline
		n    &  2.5\% & 5\% & 10\% &  25\%  &  50\% & 75\% &  90\% & 95\% & 97.5\% \\ \hline
		
30&	  14441&	 14683	&14946&	15351&	15759	&16122	&16401	&16549	&16669	\\
50& 	55739 &	56321	 &56985	  &58028&	59100&	60058	&60831	&61255	&61604	\\
70& 	131664&	132750	&133952	&135824&	137793&	139595	&141066	&141904	&142578	\\
100&	320247&	322213	&324382	&327837&	331440&	334802&	337597	&339184&	340484	\\
150	& 855835&	859596	&863666	&870382	&877419	&883981	&889521	&892672	&895337	\\
200	& 1690472&	1696362&	1702761&	1713086&	1724009&	1734375&	1743265&	1748391&	1752574		\\  \hline
		\end{tabular}
	\label{tab:cv6}
\end{table}

\vspace{1cm} 

\section{Simulation study : Comparison of the tests}
\paragraph{}
	Now in this section we will check the performances of the above tests with respect to some common alternatives and compare their relative performances. We simulate the power of the tests we proposed above for several alternatives. To made a comparison we also simulate the power of the test of independence based on the Copula based measures sample Spearman's $\hat{\rho}$ and sample kendall's $t$ [ equation (27) and (28)] from the same sample and give them in a table along with the proposed tests. For testing these tests we use the simulated quantiles ( cut-offs ) of the distribution of the sample Spearman's $\hat{\rho}$ and the sample kendall's $t$ given in table $\ref{tab:cvr}$ and $\ref{tab:cvt}$ respectively. 
	
		\begin{table}[h]
	\centering
	\caption{The simulated quantiles of the sample distribution of the sample Spearman's $\hat{\rho}$ under null $H_0 : r=0$ for different sample size $n$ }
		\begin{tabular}{|c| c c c c c c c c c|} \hline
		n    &  2.5\% & 5\% & 10\% &  25\%  &  50\% & 75\% &  90\% & 95\% & 97.5\% \\ \hline
	 30 &  -0.3615 &  -0.3068 &  -0.2400 &  -0.1270 &   0.0007 &   0.1284 &   0.2396 &   0.3050 &   0.3597\\
   50 &  -0.2808 &  -0.2352 &  -0.1849 &  -0.0976 &   0.0011 &   0.0973 &   0.1839 &   0.2345 &   0.2792\\
   70 &  -0.2358 &  -0.1988 &  -0.1556 &  -0.0820 &  -0.0001 &   0.0819 &   0.1547 &   0.1981 &   0.2349\\
  100 &  -0.1976 &  -0.1656 &  -0.1291 &  -0.0682 &   0.0002 &   0.0684 &   0.1297 &   0.1660 &   0.1984\\
  150 &  -0.1602 &  -0.1347 &  -0.1049 &  -0.0555 &   0.0003 &   0.0555 &   0.1050 &   0.1350 &   0.1617\\
  200 &  -0.1389 &  -0.1169 &  -0.0913 &  -0.0486 &  -0.0001 &   0.0476 &   0.0908 &   0.1162 &   0.1380\\  \hline
		\end{tabular}
	\label{tab:cvr}
\end{table}

		\begin{table}[h]
	\centering
	\caption{The simulated quantiles of the sample distribution of the sample kendall's $t$ under null $H_0 : r=0$ for different sample size $n$ }
		\begin{tabular}{|c| c c c c c c c c c|} \hline
		n    &  2.5\% & 5\% & 10\% &  25\%  &  50\% & 75\% &  90\% & 95\% & 97.5\% \\ \hline
   30 &   -0.2506 &  -0.2138 &  -0.1632 &  -0.0851 &   0.0023 &   0.0897 &   0.1632 &   0.2138 &   0.2506\\
   50 &   -0.1918 &  -0.1608 &  -0.1265 &  -0.0661 &   0.0008 &   0.0661 &   0.1249 &   0.1608 &   0.1902\\
   70 &   -0.1602 &  -0.1346 &  -0.1056 &  -0.0551 &  -0.0004 &   0.0551 &   0.1048 &   0.1346 &   0.1594\\
  100 &   -0.1333 &  -0.1115 &  -0.0869 &  -0.0461 &        0 &   0.0461 &   0.0877 &   0.1119 &   0.1337\\
  150 &   -0.1078 &  -0.0906 &  -0.0706 &  -0.0373 &   0.0003 &   0.0371 &   0.0704 &   0.0906 &   0.1087\\
  200 &   -0.0934 &  -0.0784 &  -0.0612 &  -0.0325 &        0 &   0.0320 &   0.0608 &   0.0779 &   0.0929\\  \hline
		\end{tabular}
	\label{tab:cvt}
\end{table}

	Further we also simulate the power of the usual parametric test for $H_0:r=0$ against $H_1:r \neq 0$ based on the pearson correlation coefficient $r$ which is given by the test statistics :
	
\begin{equation}
	T_p = \frac{\sqrt{n-2}r}{\sqrt{1 - r^2}}
\end{equation}
where $n$ is the sample size. We know that the null distribution of $T_p$ is the student's $t$-distribution with degrees of freedoms $n-2$ and we use this distribution to get the cut-off for testing using $T_p$. In the following subsections we list down our findings about the power of the different tests for some simple class of alternatives.

\subsection{Power against the Correlated Normal alternatives}	
\paragraph{}
	Firstly we choose the simplest possible alternatives that the sample comes from a bivariate normal population with means $0$ and variances $1$ but with correlation coefficient $r$. We take different values of the $r$ in $[ -1, 1]$ and simulate the power of all the tests explained above for different sample sizes $n=30, 50, 70, 100, 150, 200 $. We tabulate the simulated powers respectively in the table \ref{tab:p30} to \ref{tab:p200} given in the Appendix.
	
		From the tabulated powers simulated from $10000$ iterations we make the following findings on the comparative study of the tests of independence.

\begin{itemize}
	\item As expected, the parametric test based on $T_p$ gives the best power in all sample against the correlated normal samples because this parametric test is indeed based on the normal population and so its null distribution is exact under the assumption of normality.

	\item The test based on the sample Spearman's $\hat{\rho}$ and the sample kendall's $t$ also gives quite good results comparable with that from the parametric test and is better than our proposed tests. This is because that under the assumption of normality the sample Spearman's $\hat{\rho}$ and the sample kendall's $t$  behaves quite similar to that of the pearson's usual correlation coefficients. But we will see that as the samples deviates from the assumption of normality the power of these tests based on the parametric statistic $T_p$ and the sample Spearman's $\hat{\rho}$ and the sample kendall's $t$ will reduces.
	
	\item Next we note that among the nonparametric test proposed in the above section 6, the test based on the trace of $RPM$ that  is the statistics $T_1$ has no power against the alternative of negative association. This coincides with our previous discussion in the sub-section 6.1. However for the alternatives of positive correlations the test based on the statistic $T_1$ gives quite good power and this power also increases with the sample size. Finally under the perfect positive association that is for one correlation this test gives the power of one because in that case the distribution of the $T_1$ becomes degenerate at its highest possible value of $\frac{n(n+1)}{2}$.
	
	\item All the other proposed tests based on the statistics $T_2$ to $T_6$ gives reasonable good power in both sided alternatives. To compare among them we plot the power of these tests for different sample sizes.[ Figure $\ref{fig:p1}$ in Appendix]
	
	\item From the figures $\ref{fig:p1}$ , it is clear that in all the cases the power increases with the sample size.
	 
	 \item All the tests based on $T_2$ to $T_6$ are symmetric in both sided alternatives.
	 
	 \item The test based on $T_5$ ( the entropy of the MFCCs of the $RPV$ ) gives the lowest power among these tests.
	 
	 \item The tests based on $T_3$, $T_4$ and $T_6$ are almost uniformly comparable with the ordering $T_4 \geq T_3 \geq T_6$ except a little fluctuation for the large magnitude of the correlation that may be just due to sampling fluctuation.
	 
	 \item The test based on $T_2$ gives the best power among these tests by $T_2$ to $T_6$ for very close alternatives with the correlation $r$ being in the region $[-0.6,0.6]$. Beyond this region the power of the tests $T_4$ and $T_3$ exceed that of the $T_2$ for small samples but for large sample sizes $T_2$ again gives the highest power. So we can infer that the test based on $T_2$ is locally best among all the proposed tests for the correlated normal alternative with correlation being in the rage of $[-0.6,0.6]$.
	 
	 \item But there is a problem with the test based on $T_2$. It gives $0$ power against the alternative of perfect association ( that is, $r=1$ and $r=-1$ ). However this don not gives rise to a serious barrier because we can directly separate out the case with the perfect association by only looking at the $RPV$ as explained in the sub-section 5.2.
	 
\end{itemize}

\subsection{Power against the Correlated Random-Walk Type Normal alternatives}	
\paragraph{}
	Now we slightly deviates from the normal population. Here we consider the correlated random-walk type normal sample $(X_1,Y_1),\cdots,(X_n,Y_n)$. That is, here $(X_i,Y_i)$'s are not directly a bivariate normal random variables. Rather they are of the form 
	$$
	(X_i,Y_i) = \sum_{j=1}^i~(Z_j,W_j)~~~~~~~~~~\forall ~~i
	$$
	where $(Z_j,W_j)$'s are the correlated bivariate normal random variables with means $0$, variances $1$ and correlation $r$. Note that under this set-up $(X_i,Y_i)$'s are not independent even for $r=0$. We now take different values of the $r$ in $[ -1, 1]$ and simulate the power of all the tests explained above for different sample sizes $n=30, 50, 70, 100, 150, 200 $. We tabulate the simulated powers respectively in the table \ref{tab:p130} to \ref{tab:p1200} in Appendix.
	
\paragraph{}
	From the tabulated powers simulated from $10000$ iterations we make the following findings on the comparative study of the tests of independence. Note that here the correlation $r=0$ does not implies independence as the samples are not normally distributed. However this situation is very close to the null hypothesis and so we emphasis the comparison of the powers in this particular local alternative.

\begin{itemize}
	\item Here also the parametric test based on $T_p$ gives the best power in all sample as again samples are related to the normal population so that the null distribution of $T_p$ is agin exact. However some of our proposed non-parametric test like that based on $T_4$ gives almost equal power as this parametric test.
	
	\item The test based on the sample Spearman's $\hat{\rho}$ and the sample kendall's $t$ also gives very good power that are almost equal to that from the parametric test. And the power obtained from all this tests converges to one for any value correlation as the sample size increases. Even for correlation $r=0$ this converges is very first.
	
	\item In this case also the test based on the statistic $T_1$ ( the trace of $RPM$ ) has no power against the alternative of negative association. However for the alternatives of positive correlations the test based on the statistic $T_1$ gives quite good power and this power also increases with the sample size. And similar to the previous case, under the perfect positive association this test gives the power of one because in that case the distribution of the $T_1$ becomes degenerate at its highest possible value of $\frac{n(n+1)}{2}$.
	
	\item All the other proposed tests based on the statistics $T_2$ to $T_6$ gives similar power in both sided alternatives. To compare among them we plot the power of these tests for different sample sizes.[ Figure $\ref{fig:p2}$ in Appendix ]
	
	\item All the tests based on $T_2$ to $T_6$ are symmetric in both sided alternatives.
	
	\item From the figures $\ref{fig:p1}$ , it is clear that in all the cases the power increases with the sample size.
	 
	 \item The test based on $T_5$ ( the entropy of the MFCCs of the $RPV$ ) has very less power. So we cannot use this statistic for testing independence against such kind of alternative.
	 
	 \item The tests based on $T_3$, $T_4$ and $T_6$ are uniformly comparable with the ordering $T_4 \geq T_3 \geq T_6$. However the difference between the powers obtained from this tests are seem to be independent of the sample size and tends to zero as the magnitude of the association tends to one because then all the powers tends to one.
	 
	 \item The test based on $T_2$ gives the very less power ( even less than that obtained from $T_5$ ) when the value of $r$ is near to zero. For small and moderate sample sizes, say for $n=100$ or less, it is less than the power obtained from all the three tests based on $T_3$, $T_4$ and $T_6$ for all possible values of $r$. And for large sample sizes also ( for $n=150,~200$ or more) the situation is same for $r$ in $[-0.6,0.6]$ and for $r$ beyond this interval the power of $T_2$ exceeds that of $T_4$. But again under perfect association ( $r=1$ and $r=-1$ ) the power of this test based on $T_2$ becomes zero.
	 
	 \item Thus among all the proposed non-parametric tests the test based on $T_4$ ( Sum of squares of the Fourier coefficients  of the $RPV$ ) has the maximum power except for the large sample size with $r$ outside the interval $[-0.6,0.6]$ where $T_2$ has more power. However the difference in the powers of $T_4$ and $T_2$ is very small in the above case and otherwise the power of $T_2$ is very less compared to the $T_4$. Therefore we can conclude that it is best to use $T_4$ among all the proposed test for testing independence against such random-walk type of alternatives.
	 
\end{itemize}

%\vspace{1cm} 
\newpage
\section{Discussion}
\paragraph{}
	In our work we give some method for estimating the copula based on a bivariate sample. When the full data is available we can easily estimate the copula by the Empirical Copula. In case the full data is not available we give another method of estimating the copula based on only some sufficient information about the data. Next we develop a general framework that covers all the permutation invariant rank-based statistics. For we define the Rank-position vector $RPV$ for a bivariate sample and prove that all the permutation invariant rank-based statistics are some functional of the $RPV$. This class of all rank-based statistics which are the functions of the $RPV$ also includes the well known Copula based estimators of the measure of association like Spearman's $\rho$, Kendall's $\tau$ and Gini's $\gamma$. We also find out the null distribution of the $RPV$ under the assumption of independence and see that this null distribution is independent of the underlying distribution of the bivariate random variables. Hence all the estimators or test statistics based on the $RPV$ will be non-parametric in nature. Next we propose some tests for independence based on $RPV$ from our intuitive justification and compare their relative performances by a simulation study. In this report we present the simulated critical values of all the proposed tests of independence for some sample sizes and also tabulate the power of these tests against two kind of alternative of non-independence --- one is the correlated normal alternative and second one is the correlated random-walk type normal alternatives. We see that the proposed test statistics are locally best for the some special kind of alternatives of non-independence. For example, $T_2$ is locally best for close alternatives of correlated bivariate normal samples, whereas $T_4$ is best in the second kind of alternative considered above.
	
		This work can be extended in various perspectives in the further study. Note that the class of all possible alternatives against independence is a very broad class and we here consider two types of alternatives from that class. So it will be a good idea to compare the performances of the proposed test statistics for various other kind of alternatives also in the future study. We can also try to find out the non-null distribution of the $RPV$. Also we may try to derive the null or non-null distributions of the proposed test statistics at least asymptotically.  One may also try develop the rank-based non-parametric test-statistics for testing independence from the class of all the functional of the $RPV$ that will be uniformly based against all possible alternatives of non-independence in that class. If such a test does not exists we may look for a locally best test against some specified alternative.
		
\vspace{1cm} 
		
\section{Acknowledgement}
\paragraph{}
	We are grateful to Prof. Debapriya Sengupta of Applied Statistics Unit (ASU) in the Indian Statistical Institute, Kolkata for his great guidance and help to complete this project.

\newpage

\section{Appendix  : Tables and Plots of the simulated Power of the different tests} 

\begin{table}[ht]
	\centering
	\caption{The simulated Powers of the Test of independence based on several test statistics against the alternative that the sample comes from bivariate normal with different correlation coefficients $r$ and for sample size $n=30$ }
		\begin{tabular}{|c| c c c c c c c c c|} \hline
		r    &  $T_1$ & $T_2$ & $T_3$ &  $T_4$  &  $T_5$ & $T_6$ &  $Spearman's~~ \hat{\rho}$ & $Kendall's~~ t$ & $T_p$ \\ \hline
	 -1.0000 &      0  &       0 &  1.0000 &  1.0000 &       0 &  1.0000 &  1.0000 &  1.0000 &  1.0000\\
   -0.9000 &  0.0159 &  0.4862 &  0.4744 &  0.5671 &  0.0775 &  0.4365 &  1.0000 &  1.0000 &  1.0000\\
   -0.8000 &  0.0183 &  0.3672 &  0.2058 &  0.2453 &  0.0721 &  0.1933 &  0.9996 &  0.9994 &  0.9998\\
   -0.7000 &  0.0203 &  0.2498 &  0.1110 &  0.1226 &  0.0667 &  0.1057 &  0.9881 &  0.9881 &  0.9958\\
   -0.6000 &  0.0251 &  0.1739 &  0.0713 &  0.0747 &  0.0565 &  0.0691 &  0.9196 &  0.9166 &  0.9551\\
   -0.5000 &  0.0320 &  0.1259 &  0.0605 &  0.0609 &  0.0487 &  0.0581 &  0.7792 &  0.7713 &  0.8359\\
   -0.4000 &  0.0359 &  0.0963 &  0.0493 &  0.0503 &  0.0504 &  0.0507 &  0.5484 &  0.5406 &  0.6129\\
   -0.3000 &  0.0431 &  0.0701 &  0.0503 &  0.0501 &  0.0504 &  0.0496 &  0.3229 &  0.3133 &  0.3605\\
   -0.2000 &  0.0503 &  0.0592 &  0.0528 &  0.0530 &  0.0504 &  0.0528 &  0.1659 &  0.1610 &  0.1831\\
   -0.1000 &  0.0658 &  0.0545 &  0.0488 &  0.0479 &  0.0479 &  0.0496 &  0.0781 &  0.0759 &  0.0794\\
        0  &  0.0802 &  0.0486 &  0.0516 &  0.0496 &  0.0479 &  0.0514 &  0.0553 &  0.0555 &  0.0534\\
    0.1000 &  0.0957 &  0.0527 &  0.0481 &  0.0498 &  0.0506 &  0.0485 &  0.0789 &  0.0760 &  0.0780\\
    0.2000 &  0.1296 &  0.0615 &  0.0462 &  0.0470 &  0.0487 &  0.0471 &  0.1739 &  0.1690 &  0.1893\\
    0.3000 &  0.1616 &  0.0737 &  0.0503 &  0.0511 &  0.0532 &  0.0511 &  0.3436 &  0.3335 &  0.3811\\
    0.4000 &  0.2045 &  0.0937 &  0.0500 &  0.0495 &  0.0550 &  0.0504 &  0.5542 &  0.5432 &  0.6070\\
    0.5000 &  0.2578 &  0.1210 &  0.0569 &  0.0586 &  0.0543 &  0.0580 &  0.7728 &  0.7643 &  0.8263\\
    0.6000 &  0.3489 &  0.1815 &  0.0757 &  0.0792 &  0.0626 &  0.0730 &  0.9283 &  0.9235 &  0.9570\\
    0.7000 &  0.4437 &  0.2478 &  0.1031 &  0.1158 &  0.0659 &  0.0985 &  0.9880 &  0.9869 &  0.9959\\
    0.8000 &  0.5965 &  0.3605 &  0.1980 &  0.2356 &  0.0697 &  0.1826 &  0.9995 &  0.9995 &  0.9998\\
    0.9000 &  0.8213 &  0.4792 &  0.4883 &  0.5754 &  0.0842 &  0.4525 &  1.0000 &  1.0000 &  1.0000\\
    1.0000 &  1.0000 &       0 &  1.0000 &  1.0000 &       0 &  1.0000 &  1.0000 &  1.0000 &  1.0000\\ \hline
		\end{tabular}
	\label{tab:p30}
\end{table}

\begin{table}[h]
	\centering
	\caption{The simulated Powers of the Test of independence based on several test statistics against the alternative that the sample comes from bivariate normal with different correlation coefficients $r$ and for sample size $n=50$ }
		\begin{tabular}{|c| c c c c c c c c c|} \hline
		r    &  $T_1$ & $T_2$ & $T_3$ &  $T_4$  &  $T_5$ & $T_6$ &  $Spearman's~~ \hat{\rho}$ & $Kendall's~~ t$ & $T_p$ \\ \hline
   -1.0000 &       0 &       0 &  1.0000 &  1.0000 &       0 &  1.0000 &  1.0000 &  1.0000 &  1.0000\\
   -0.9000 &  0.0148 &  0.6124 &  0.9170 &  0.9672 &  0.1211 &  0.8832 &  1.0000 &  1.0000 &  1.0000\\
   -0.8000 &  0.0201 &  0.5739 &  0.5617 &  0.6610 &  0.0987 &  0.5085 &  1.0000 &  1.0000 &  1.0000\\
   -0.7000 &  0.0235 &  0.4266 &  0.2742 &  0.3326 &  0.0795 &  0.2457 &  0.9995 &  0.9995 &  0.9999\\
   -0.6000 &  0.0255 &  0.2913 &  0.1476 &  0.1730 &  0.0664 &  0.1346 &  0.9933 &  0.9927 &  0.9976\\
   -0.5000 &  0.0310 &  0.1880 &  0.0899 &  0.0974 &  0.0543 &  0.0830 &  0.9472 &  0.9456 &  0.9702\\
   -0.4000 &  0.0375 &  0.1193 &  0.0616 &  0.0646 &  0.0567 &  0.0603 &  0.7757 &  0.7730 &  0.8291\\
   -0.3000 &  0.0439 &  0.0840 &  0.0520 &  0.0525 &  0.0508 &  0.0528 &  0.5243 &  0.5212 &  0.5803\\
   -0.2000 &  0.0566 &  0.0636 &  0.0483 &  0.0487 &  0.0453 &  0.0493 &  0.2612 &  0.2553 &  0.2884\\
   -0.1000 &  0.0637 &  0.0586 &  0.0529 &  0.0530 &  0.0494 &  0.0535 &  0.1026 &  0.1007 &  0.1133\\
         0 &  0.0869 &  0.0483 &  0.0473 &  0.0471 &  0.0450 &  0.0465 &  0.0568 &  0.0554 &  0.0548\\
    0.1000 &  0.0980 &  0.0513 &  0.0515 &  0.0496 &  0.0460 &  0.0507 &  0.1002 &  0.0992 &  0.1083\\
    0.2000 &  0.1285 &  0.0594 &  0.0504 &  0.0521 &  0.0477 &  0.0485 &  0.2551 &  0.2523 &  0.2807\\
    0.3000 &  0.1616 &  0.0855 &  0.0546 &  0.0563 &  0.0471 &  0.0543 &  0.5160 &  0.5140 &  0.5722\\
    0.4000 &  0.2075 &  0.1213 &  0.0619 &  0.0663 &  0.0504 &  0.0623 &  0.7863 &  0.7843 &  0.8321\\
    0.5000 &  0.2779 &  0.1863 &  0.0889 &  0.0970 &  0.0581 &  0.0829 &  0.9461 &  0.9470 &  0.9660\\
    0.6000 &  0.3676 &  0.2886 &  0.1377 &  0.1619 &  0.0644 &  0.1259 &  0.9945 &  0.9941 &  0.9972\\
    0.7000 &  0.4772 &  0.4290 &  0.2781 &  0.3451 &  0.0760 &  0.2490 &  0.9996 &  0.9998 &  1.0000\\
    0.8000 &  0.6385 &  0.5746 &  0.5581 &  0.6609 &  0.1045 &  0.5006 &  1.0000 &  1.0000 &  1.0000\\
    0.9000 &  0.8511 &  0.6127 &  0.9214 &  0.9673 &  0.1184 &  0.8874 &  1.0000 &  1.0000 &  1.0000\\
    1.0000 &  1.0000 &       0 &  1.0000 &  1.0000 &       0 &  1.0000 &  1.0000 &  1.0000 &  1.0000\\ \hline
		\end{tabular}
	\label{tab:p50}
\end{table}

\begin{table}[h]
	\centering
	\caption{The simulated Powers of the Test of independence based on several test statistics against the alternative that the sample comes from bivariate normal with different correlation coefficients $r$ and for sample size $n=70$ }
		\begin{tabular}{|c| c c c c c c c c c|} \hline
		r    &  $T_1$ & $T_2$ & $T_3$ &  $T_4$  &  $T_5$ & $T_6$ &  $Spearman's~~ \hat{\rho}$ & $Kendall's~~ t$ & $T_p$ \\ \hline
   -1.0000 &       0 &       0 &  1.0000 &  1.0000 &       0 &  1.0000 &  1.0000 &  1.0000 &  1.0000\\
   -0.9000 &  0.0167 &  0.7330 &  0.9964 &  0.9992 &  0.1580 &  0.9916 &  1.0000 &  1.0000 &  1.0000\\
   -0.8000 &  0.0174 &  0.7736 &  0.8283 &  0.9076 &  0.1172 &  0.7700 &  1.0000 &  1.0000 &  1.0000\\
   -0.7000 &  0.0202 &  0.6453 &  0.4707 &  0.5835 &  0.0873 &  0.4146 &  1.0000 &  1.0000 &  1.0000\\
   -0.6000 &  0.0263 &  0.4385 &  0.2385 &  0.2956 &  0.0736 &  0.2066 &  0.9994 &  0.9994 &  0.9998\\
   -0.5000 &  0.0307 &  0.2756 &  0.1275 &  0.1480 &  0.0623 &  0.1178 &  0.9887 &  0.9882 &  0.9947\\
   -0.4000 &  0.0337 &  0.1646 &  0.0791 &  0.0889 &  0.0569 &  0.0740 &  0.9121 &  0.9109 &  0.9406\\
   -0.3000 &  0.0429 &  0.1067 &  0.0551 &  0.0573 &  0.0460 &  0.0536 &  0.6779 &  0.6748 &  0.7262\\
   -0.2000 &  0.0537 &  0.0754 &  0.0512 &  0.0528 &  0.0525 &  0.0500 &  0.3515 &  0.3483 &  0.3874\\
   -0.1000 &  0.0640 &  0.0555 &  0.0485 &  0.0496 &  0.0455 &  0.0488 &  0.1204 &  0.1180 &  0.1280\\
         0 &  0.0802 &  0.0489 &  0.0518 &  0.0498 &  0.0503 &  0.0507 &  0.0453 &  0.0449 &  0.0488\\
    0.1000 &  0.0988 &  0.0550 &  0.0520 &  0.0512 &  0.0553 &  0.0496 &  0.1214 &  0.1190 &  0.1303\\
    0.2000 &  0.1326 &  0.0632 &  0.0508 &  0.0499 &  0.0477 &  0.0491 &  0.3609 &  0.3603 &  0.3925\\
    0.3000 &  0.1693 &  0.1042 &  0.0545 &  0.0543 &  0.0507 &  0.0527 &  0.6727 &  0.6720 &  0.7205\\
    0.4000 &  0.2109 &  0.1621 &  0.0776 &  0.0839 &  0.0585 &  0.0757 &  0.9079 &  0.9056 &  0.9357\\
    0.5000 &  0.2759 &  0.2748 &  0.1260 &  0.1499 &  0.0649 &  0.1137 &  0.9883 &  0.9891 &  0.9949\\
    0.6000 &  0.3704 &  0.4367 &  0.2442 &  0.3009 &  0.0728 &  0.2131 &  0.9996 &  0.9997 &  1.0000\\
    0.7000 &  0.4819 &  0.6404 &  0.4656 &  0.5817 &  0.0929 &  0.4059 &  1.0000 &  1.0000 &  1.0000\\
    0.8000 &  0.6500 &  0.7803 &  0.8275 &  0.9097 &  0.1248 &  0.7661 &  1.0000 &  1.0000 &  1.0000\\
    0.9000 &  0.8676 &  0.7331 &  0.9966 &  0.9996 &  0.1644 &  0.9909 &  1.0000 &  1.0000 &  1.0000\\
    1.0000 &  1.0000 &       0 &  1.0000 &  1.0000 &       0 &  1.0000 &  1.0000 &  1.0000 &  1.0000\\ \hline
		\end{tabular}
	\label{tab:p70}
\end{table}

\begin{table}[h]
	\centering
	\caption{The simulated Powers of the Test of independence based on several test statistics against the alternative that the sample comes from bivariate normal with different correlation coefficients $r$ and for sample size $n=100$ }
		\begin{tabular}{|c| c c c c c c c c c|} \hline
		r    &  $T_1$ & $T_2$ & $T_3$ &  $T_4$  &  $T_5$ & $T_6$ &  $Spearman's~~ \hat{\rho}$ & $Kendall's~~ t$ & $T_p$ \\ \hline
   -1.0000 &       0 &       0 &  1.0000 &  1.0000 &       0 &  1.0000 &  1.0000 &  1.0000 &  1.0000\\
   -0.9000 &  0.0174 &  0.8593 &  1.0000 &  1.0000 &  0.2203 &  0.9998 &  1.0000 &  1.0000 &  1.0000\\
   -0.8000 &  0.0189 &  0.9134 &  0.9752 &  0.9936 &  0.1520 &  0.9493 &  1.0000 &  1.0000 &  1.0000\\
   -0.7000 &  0.0232 &  0.8411 &  0.7394 &  0.8522 &  0.1153 &  0.6617 &  1.0000 &  1.0000 &  1.0000\\
   -0.6000 &  0.0261 &  0.6366 &  0.4061 &  0.5153 &  0.0871 &  0.3460 &  1.0000 &  1.0000 &  1.0000\\
   -0.5000 &  0.0300 &  0.4081 &  0.1977 &  0.2490 &  0.0719 &  0.1664 &  0.9994 &  0.9995 &  0.9996\\
   -0.4000 &  0.0345 &  0.2347 &  0.1027 &  0.1163 &  0.0531 &  0.0930 &  0.9741 &  0.9733 &  0.9846\\
   -0.3000 &  0.0440 &  0.1304 &  0.0708 &  0.0727 &  0.0562 &  0.0671 &  0.8252 &  0.8250 &  0.8648\\
   -0.2000 &  0.0528 &  0.0780 &  0.0536 &  0.0539 &  0.0488 &  0.0528 &  0.4802 &  0.4790 &  0.5265\\
   -0.1000 &  0.0664 &  0.0493 &  0.0473 &  0.0465 &  0.0472 &  0.0454 &  0.1508 &  0.1509 &  0.1661\\
         0 &  0.0754 &  0.0492 &  0.0498 &  0.0497 &  0.0506 &  0.0495 &  0.0500 &  0.0499 &  0.0493\\
    0.1000 &  0.1008 &  0.0538 &  0.0519 &  0.0508 &  0.0524 &  0.0511 &  0.1473 &  0.1468 &  0.1593\\
    0.2000 &  0.1256 &  0.0796 &  0.0532 &  0.0541 &  0.0513 &  0.0518 &  0.4705 &  0.4709 &  0.5181\\
    0.3000 &  0.1695 &  0.1348 &  0.0683 &  0.0736 &  0.0525 &  0.0669 &  0.8168 &  0.8154 &  0.8606\\
    0.4000 &  0.2162 &  0.2274 &  0.1003 &  0.1171 &  0.0571 &  0.0895 &  0.9757 &  0.9764 &  0.9866\\
    0.5000 &  0.2905 &  0.4063 &  0.1938 &  0.2471 &  0.0679 &  0.1683 &  0.9993 &  0.9991 &  0.9997\\
    0.6000 &  0.3731 &  0.6246 &  0.4025 &  0.5129 &  0.0850 &  0.3395 &  1.0000 &  1.0000 &  1.0000\\
    0.7000 &  0.5020 &  0.8365 &  0.7366 &  0.8523 &  0.1083 &  0.6573 &  1.0000 &  1.0000 &  1.0000\\
    0.8000 &  0.6689 &  0.9090 &  0.9730 &  0.9926 &  0.1537 &  0.9447 &  1.0000 &  1.0000 &  1.0000\\
    0.9000 &  0.8843 &  0.8634 &  1.0000 &  1.0000 &  0.2263 &  1.0000 &  1.0000 &  1.0000 &  1.0000\\
    1.0000 &  1.0000 &       0 &  1.0000 &  1.0000 &       0 &  1.0000 &  1.0000 &  1.0000 &  1.0000\\ \hline
		\end{tabular}
	\label{tab:p100}
\end{table}

\begin{table}[h]
	\centering
	\caption{The simulated Powers of the Test of independence based on several test statistics against the alternative that the sample comes from bivariate normal with different correlation coefficients $r$ and for sample size $n=150$ }
		\begin{tabular}{|c| c c c c c c c c c|} \hline
		r    &  $T_1$ & $T_2$ & $T_3$ &  $T_4$  &  $T_5$ & $T_6$ &  $Spearman's~~ \hat{\rho}$ & $Kendall's~~ t$ & $T_p$ \\ \hline
   -1.0000 &       0 &       0 &  1.0000 &  1.0000 &       0 &  1.0000 &  1.0000 &  1.0000 &  1.0000\\
   -0.9000 &  0.0164 &  0.9476 &  1.0000 &  1.0000 &  0.3146 &  1.0000 &  1.0000 &  1.0000 &  1.0000\\
   -0.8000 &  0.0197 &  0.9778 &  0.9997 &  1.0000 &  0.1991 &  0.9980 &  1.0000 &  1.0000 &  1.0000\\
   -0.7000 &  0.0224 &  0.9660 &  0.9455 &  0.9881 &  0.1367 &  0.8919 &  1.0000 &  1.0000 &  1.0000\\
   -0.6000 &  0.0261 &  0.8606 &  0.6567 &  0.7988 &  0.0908 &  0.5599 &  1.0000 &  1.0000 &  1.0000\\
   -0.5000 &  0.0294 &  0.6236 &  0.3383 &  0.4572 &  0.0747 &  0.2739 &  1.0000 &  1.0000 &  1.0000\\
   -0.4000 &  0.0349 &  0.3587 &  0.1508 &  0.1907 &  0.0663 &  0.1277 &  0.9976 &  0.9975 &  0.9991\\
   -0.3000 &  0.0454 &  0.1852 &  0.0771 &  0.0878 &  0.0521 &  0.0723 &  0.9462 &  0.9459 &  0.9627\\
   -0.2000 &  0.0553 &  0.1032 &  0.0603 &  0.0579 &  0.0514 &  0.0561 &  0.6492 &  0.6507 &  0.6915\\
   -0.1000 &  0.0648 &  0.0596 &  0.0474 &  0.0450 &  0.0491 &  0.0483 &  0.2097 &  0.2080 &  0.2302\\
         0 &  0.0842 &  0.0511 &  0.0520 &  0.0523 &  0.0464 &  0.0508 &  0.0481 &  0.0477 &  0.0478\\
    0.1000 &  0.1027 &  0.0553 &  0.0516 &  0.0491 &  0.0510 &  0.0510 &  0.2035 &  0.2044 &  0.2229\\
    0.2000 &  0.1344 &  0.0987 &  0.0551 &  0.0559 &  0.0505 &  0.0544 &  0.6433 &  0.6421 &  0.6945\\
    0.3000 &  0.1628 &  0.1857 &  0.0766 &  0.0846 &  0.0562 &  0.0721 &  0.9446 &  0.9453 &  0.9646\\
    0.4000 &  0.2224 &  0.3546 &  0.1450 &  0.1867 &  0.0616 &  0.1246 &  0.9981 &  0.9980 &  0.9996\\
    0.5000 &  0.2939 &  0.6191 &  0.3304 &  0.4488 &  0.0736 &  0.2704 &  1.0000 &  1.0000 &  1.0000\\
    0.6000 &  0.3902 &  0.8616 &  0.6545 &  0.7995 &  0.0939 &  0.5595 &  1.0000 &  1.0000 &  1.0000\\
    0.7000 &  0.5277 &  0.9641 &  0.9400 &  0.9854 &  0.1309 &  0.8905 &  1.0000 &  1.0000 &  1.0000\\
    0.8000 &  0.6912 &  0.9772 &  0.9997 &  1.0000 &  0.1938 &  0.9990 &  1.0000 &  1.0000 &  1.0000\\
    0.9000 &  0.8957 &  0.9485 &  1.0000 &  1.0000 &  0.3169 &  1.0000 &  1.0000 &  1.0000 &  1.0000\\
    1.0000 &  1.0000 &       0 &  1.0000 &  1.0000 &       0 &  1.0000 &  1.0000 &  1.0000 &  1.0000\\ \hline
		\end{tabular}
	\label{tab:p150}
\end{table}

\begin{table}[h]
	\centering
	\caption{The simulated Powers of the Test of independence based on several test statistics against the alternative that the sample comes from bivariate normal with different correlation coefficients $r$ and for sample size $n=200$ }
		\begin{tabular}{|c| c c c c c c c c c|} \hline
		r    &  $T_1$ & $T_2$ & $T_3$ &  $T_4$  &  $T_5$ & $T_6$ &  $Spearman's~~ \hat{\rho}$ & $Kendall's~~ t$ & $T_p$ \\ \hline
   -1.0000 &       0 &       0 &  1.0000 &  1.0000 &       0 &  1.0000 &  1.0000 &  1.0000 &  1.0000\\
   -0.9000 &  0.0179 &  0.9812 &  1.0000 &  1.0000 &  0.4119 &  1.0000 &  1.0000 &  1.0000 &  1.0000\\
   -0.8000 &  0.0194 &  0.9937 &  1.0000 &  1.0000 &  0.2366 &  0.9999 &  1.0000 &  1.0000 &  1.0000\\
   -0.7000 &  0.0199 &  0.9940 &  0.9909 &  0.9988 &  0.1548 &  0.9725 &  1.0000 &  1.0000 &  1.0000\\
   -0.6000 &  0.0237 &  0.9525 &  0.8365 &  0.9411 &  0.1082 &  0.7309 &  1.0000 &  1.0000 &  1.0000\\
   -0.5000 &  0.0291 &  0.7801 &  0.4862 &  0.6377 &  0.0843 &  0.3907 &  1.0000 &  1.0000 &  1.0000\\
   -0.4000 &  0.0325 &  0.4775 &  0.2077 &  0.2785 &  0.0666 &  0.1662 &  1.0000 &  1.0000 &  1.0000\\
   -0.3000 &  0.0415 &  0.2495 &  0.1025 &  0.1223 &  0.0589 &  0.0911 &  0.9842 &  0.9844 &  0.9917\\
   -0.2000 &  0.0517 &  0.1176 &  0.0615 &  0.0612 &  0.0533 &  0.0592 &  0.7719 &  0.7714 &  0.8147\\
   -0.1000 &  0.0632 &  0.0681 &  0.0493 &  0.0511 &  0.0482 &  0.0479 &  0.2764 &  0.2740 &  0.2971\\
         0 &  0.0800 &  0.0501 &  0.0545 &  0.0522 &  0.0533 &  0.0530 &  0.0520 &  0.0514 &  0.0523\\
    0.1000 &  0.0979 &  0.0648 &  0.0494 &  0.0482 &  0.0509 &  0.0499 &  0.2694 &  0.2682 &  0.2884\\
    0.2000 &  0.1270 &  0.1163 &  0.0592 &  0.0633 &  0.0535 &  0.0578 &  0.7762 &  0.7739 &  0.8133\\
    0.3000 &  0.1682 &  0.2380 &  0.0947 &  0.1113 &  0.0589 &  0.0859 &  0.9850 &  0.9851 &  0.9923\\
    0.4000 &  0.2148 &  0.4785 &  0.2139 &  0.2871 &  0.0699 &  0.1740 &  1.0000 &  1.0000 &  1.0000\\
    0.5000 &  0.2948 &  0.7727 &  0.4723 &  0.6308 &  0.0824 &  0.3747 &  1.0000 &  1.0000 &  1.0000\\
    0.6000 &  0.3906 &  0.9502 &  0.8328 &  0.9415 &  0.1066 &  0.7336 &  1.0000 &  1.0000 &  1.0000\\
    0.7000 &  0.5393 &  0.9936 &  0.9922 &  0.9993 &  0.1562 &  0.9719 &  1.0000 &  1.0000 &  1.0000\\
    0.8000 &  0.7003 &  0.9957 &  1.0000 &  1.0000 &  0.2427 &  1.0000 &  1.0000 &  1.0000 &  1.0000\\
    0.9000 &  0.9033 &  0.9804 &  1.0000 &  1.0000 &  0.4118 &  1.0000 &  1.0000 &  1.0000 &  1.0000\\
    1.0000 &  1.0000 &       0 &  1.0000 &  1.0000 &       0 &  1.0000 &  1.0000 &  1.0000 &  1.0000\\\hline
		\end{tabular}
	\label{tab:p200}
\end{table}

%%%%%%%%%%%%%%%%%%%%%%%%%%%%%%%%%%%%%%%%%%%%%%%%%%%%%%%%%%%%%%%%%%%%%%%%%%%%%%%%

\begin{table}[h]
	\centering
	\caption{The simulated Powers of the Test of independence based on several test statistics against the correlated random-walk type normal alternative with different correlation coefficients $r$ and for sample size $n=30$ }
		\begin{tabular}{|c| c c c c c c c c c|} \hline
		r    &  $T_1$ & $T_2$ & $T_3$ &  $T_4$  &  $T_5$ & $T_6$ &  $Spearman's~~ \hat{\rho}$ & $Kendall's~~ t$ & $T_p$ \\ \hline
	 -1.0000 &       0 &       0 &  1.0000 &  1.0000 &       0 &  1.0000 &  1.0000 &  1.0000 &  1.0000\\
   -0.9000 &  0.0141 &  0.4349 &  0.5318 &  0.5952 &  0.0620 &  0.5071 &  0.9775 &  0.9787 &  0.9807\\
   -0.8000 &  0.0220 &  0.3441 &  0.3546 &  0.4045 &  0.0671 &  0.3335 &  0.9040 &  0.9084 &  0.9159\\
   -0.7000 &  0.0289 &  0.2542 &  0.2710 &  0.3116 &  0.0682 &  0.2579 &  0.8359 &  0.8386 &  0.8492\\
   -0.6000 &  0.0341 &  0.1767 &  0.2148 &  0.2539 &  0.0612 &  0.2013 &  0.7626 &  0.7590 &  0.7749\\
   -0.5000 &  0.0402 &  0.1307 &  0.1793 &  0.2148 &  0.0642 &  0.1688 &  0.6883 &  0.6845 &  0.7129\\
   -0.4000 &  0.0535 &  0.1026 &  0.1578 &  0.1833 &  0.0641 &  0.1484 &  0.6326 &  0.6237 &  0.6581\\
   -0.3000 &  0.0626 &  0.0817 &  0.1419 &  0.1676 &  0.0661 &  0.1327 &  0.5962 &  0.5856 &  0.6207\\
   -0.2000 &  0.0819 &  0.0710 &  0.1312 &  0.1527 &  0.0626 &  0.1218 &  0.5692 &  0.5519 &  0.5927\\
   -0.1000 &  0.1079 &  0.0560 &  0.1266 &  0.1486 &  0.0571 &  0.1206 &  0.5487 &  0.5313 &  0.5784\\
         0 &  0.1231 &  0.0603 &  0.1298 &  0.1464 &  0.0634 &  0.1226 &  0.5371 &  0.5195 &  0.5671\\
    0.1000 &  0.1623 &  0.0615 &  0.1277 &  0.1486 &  0.0620 &  0.1207 &  0.5479 &  0.5316 &  0.5750\\
    0.2000 &  0.1862 &  0.0664 &  0.1336 &  0.1553 &  0.0610 &  0.1256 &  0.5618 &  0.5428 &  0.5848\\
    0.3000 &  0.2268 &  0.0798 &  0.1465 &  0.1681 &  0.0646 &  0.1378 &  0.5908 &  0.5785 &  0.6158\\
    0.4000 &  0.2754 &  0.1073 &  0.1595 &  0.1893 &  0.0645 &  0.1482 &  0.6354 &  0.6271 &  0.6589\\
    0.5000 &  0.3357 &  0.1335 &  0.1773 &  0.2075 &  0.0602 &  0.1677 &  0.6942 &  0.6866 &  0.7124\\
    0.6000 &  0.4250 &  0.1817 &  0.2169 &  0.2530 &  0.0671 &  0.2045 &  0.7529 &  0.7515 &  0.7744\\
    0.7000 &  0.5076 &  0.2489 &  0.2650 &  0.3124 &  0.0639 &  0.2500 &  0.8345 &  0.8347 &  0.8459\\
    0.8000 &  0.6378 &  0.3342 &  0.3609 &  0.4165 &  0.0680 &  0.3397 &  0.9078 &  0.9105 &  0.9177\\
    0.9000 &  0.7985 &  0.4441 &  0.5333 &  0.5973 &  0.0598 &  0.5101 &  0.9793 &  0.9804 &  0.9823\\
    1.0000 &  1.0000 &       0 &  1.0000 &  1.0000 &       0 &  1.0000 &  1.0000 &  1.0000 &  1.0000\\	 \hline
		\end{tabular}
	\label{tab:p130}
\end{table}

\begin{table}[h]
	\centering
	\caption{The simulated Powers of the Test of independence based on several test statistics against the correlated random-walk type normal alternative with different correlation coefficients $r$ and for sample size $n=50$ }
		\begin{tabular}{|c| c c c c c c c c c|} \hline
		r    &  $T_1$ & $T_2$ & $T_3$ &  $T_4$  &  $T_5$ & $T_6$ &  $Spearman's~~ \hat{\rho}$ & $Kendall's~~ t$ & $T_p$ \\ \hline
   -1.0000 &       0 &       0 &  1.0000 &  1.0000 &       0 &  1.0000 &  1.0000 &  1.0000 &  1.0000\\
   -0.9000 &  0.0153 &  0.5838 &  0.8278 &  0.8577 &  0.0920 &  0.8053 &  0.9849 &  0.9878 &  0.9877\\
   -0.8000 &  0.0186 &  0.5116 &  0.6655 &  0.7057 &  0.0994 &  0.6395 &  0.9395 &  0.9440 &  0.9431\\
   -0.7000 &  0.0270 &  0.3920 &  0.5482 &  0.5908 &  0.0940 &  0.5240 &  0.8801 &  0.8827 &  0.8887\\
   -0.6000 &  0.0331 &  0.2736 &  0.4617 &  0.5062 &  0.0882 &  0.4356 &  0.8175 &  0.8174 &  0.8324\\
   -0.5000 &  0.0407 &  0.1891 &  0.4076 &  0.4529 &  0.0826 &  0.3829 &  0.7689 &  0.7661 &  0.7865\\
   -0.4000 &  0.0535 &  0.1314 &  0.3746 &  0.4135 &  0.0908 &  0.3522 &  0.7205 &  0.7129 &  0.7382\\
   -0.3000 &  0.0724 &  0.0916 &  0.3336 &  0.3748 &  0.0806 &  0.3137 &  0.6790 &  0.6705 &  0.6999\\
   -0.2000 &  0.0870 &  0.0747 &  0.3204 &  0.3595 &  0.0788 &  0.2981 &  0.6689 &  0.6566 &  0.6935\\
   -0.1000 &  0.1108 &  0.0565 &  0.3164 &  0.3562 &  0.0835 &  0.2969 &  0.6488 &  0.6382 &  0.6741\\
         0 &  0.1326 &  0.0561 &  0.2994 &  0.3377 &  0.0796 &  0.2820 &  0.6503 &  0.6347 &  0.6756\\
    0.1000 &  0.1553 &  0.0580 &  0.3083 &  0.3474 &  0.0795 &  0.2875 &  0.6432 &  0.6312 &  0.6686\\
    0.2000 &  0.1949 &  0.0725 &  0.3216 &  0.3648 &  0.0791 &  0.2991 &  0.6648 &  0.6537 &  0.6869\\
    0.3000 &  0.2369 &  0.0925 &  0.3399 &  0.3773 &  0.0828 &  0.3152 &  0.6808 &  0.6727 &  0.7047\\
    0.4000 &  0.2864 &  0.1302 &  0.3634 &  0.4024 &  0.0865 &  0.3406 &  0.7180 &  0.7131 &  0.7362\\
    0.5000 &  0.3450 &  0.1868 &  0.4074 &  0.4506 &  0.0836 &  0.3808 &  0.7618 &  0.7580 &  0.7791\\
    0.6000 &  0.4271 &  0.2691 &  0.4701 &  0.5136 &  0.0898 &  0.4470 &  0.8227 &  0.8233 &  0.8334\\
    0.7000 &  0.5342 &  0.3837 &  0.5495 &  0.5938 &  0.0873 &  0.5221 &  0.8796 &  0.8843 &  0.8868\\
    0.8000 &  0.6545 &  0.5026 &  0.6557 &  0.6989 &  0.0941 &  0.6292 &  0.9375 &  0.9419 &  0.9431\\
    0.9000 &  0.8239 &  0.5854 &  0.8285 &  0.8596 &  0.0970 &  0.8057 &  0.9868 &  0.9885 &  0.9883\\
    1.0000 &  1.0000 &       0 &  1.0000 &  1.0000 &       0 &  1.0000 &  1.0000 &  1.0000 &  1.0000\\\hline
		\end{tabular}
	\label{tab:p150}
\end{table}

\begin{table}[h]
	\centering
	\caption{The simulated Powers of the Test of independence based on several test statistics against the  correlated random-walk type normal alternative with different correlation coefficients $r$ and for sample size $n=70$ }
		\begin{tabular}{|c| c c c c c c c c c|} \hline
		r    &  $T_1$ & $T_2$ & $T_3$ &  $T_4$  &  $T_5$ & $T_6$ &  $Spearman's~~ \hat{\rho}$ & $Kendall's~~ t$ & $T_p$ \\ \hline
   -1.0000 &       0 &       0 &  1.0000 &  1.0000 &       0 &  1.0000 &  1.0000 &  1.0000 &  1.0000\\
   -0.9000 &  0.0172 &  0.7037 &  0.9050 &  0.9223 &  0.1282 &  0.8890 &  0.9901 &  0.9918 &  0.9910\\
   -0.8000 &  0.0204 &  0.6911 &  0.7752 &  0.8073 &  0.1238 &  0.7532 &  0.9511 &  0.9552 &  0.9540\\
   -0.7000 &  0.0242 &  0.5642 &  0.6752 &  0.7139 &  0.1149 &  0.6506 &  0.9011 &  0.9049 &  0.9076\\
   -0.6000 &  0.0332 &  0.3929 &  0.6017 &  0.6430 &  0.1084 &  0.5747 &  0.8501 &  0.8514 &  0.8613\\
   -0.5000 &  0.0390 &  0.2664 &  0.5363 &  0.5785 &  0.0996 &  0.5067 &  0.8021 &  0.8006 &  0.8142\\
   -0.4000 &  0.0554 &  0.1852 &  0.4951 &  0.5405 &  0.1012 &  0.4679 &  0.7603 &  0.7559 &  0.7724\\
   -0.3000 &  0.0671 &  0.1220 &  0.4740 &  0.5178 &  0.1013 &  0.4449 &  0.7347 &  0.7289 &  0.7538\\
   -0.2000 &  0.0843 &  0.0782 &  0.4425 &  0.4863 &  0.0987 &  0.4159 &  0.7115 &  0.7041 &  0.7302\\
   -0.1000 &  0.1032 &  0.0584 &  0.4236 &  0.4722 &  0.1007 &  0.3994 &  0.6904 &  0.6760 &  0.7102\\
         0 &  0.1294 &  0.0539 &  0.4404 &  0.4846 &  0.0990 &  0.4129 &  0.6971 &  0.6846 &  0.7167\\
    0.1000 &  0.1603 &  0.0612 &  0.4377 &  0.4844 &  0.0940 &  0.4139 &  0.7058 &  0.6898 &  0.7200\\
    0.2000 &  0.1889 &  0.0837 &  0.4455 &  0.4939 &  0.0965 &  0.4183 &  0.7178 &  0.7077 &  0.7344\\
    0.3000 &  0.2407 &  0.1213 &  0.4695 &  0.5166 &  0.0945 &  0.4413 &  0.7391 &  0.7293 &  0.7538\\
    0.4000 &  0.2909 &  0.1815 &  0.4943 &  0.5422 &  0.1056 &  0.4696 &  0.7648 &  0.7612 &  0.7787\\
    0.5000 &  0.3544 &  0.2663 &  0.5349 &  0.5793 &  0.1049 &  0.5089 &  0.8049 &  0.8028 &  0.8146\\
    0.6000 &  0.4342 &  0.4031 &  0.6007 &  0.6453 &  0.1118 &  0.5705 &  0.8447 &  0.8476 &  0.8533\\
    0.7000 &  0.5347 &  0.5539 &  0.6764 &  0.7154 &  0.1217 &  0.6507 &  0.8977 &  0.9024 &  0.9028\\
    0.8000 &  0.6732 &  0.6918 &  0.7794 &  0.8099 &  0.1236 &  0.7588 &  0.9515 &  0.9557 &  0.9549\\
    0.9000 &  0.8310 &  0.6892 &  0.9039 &  0.9200 &  0.1313 &  0.8890 &  0.9879 &  0.9902 &  0.9900\\
    1.0000 &  1.0000 &       0 &  1.0000 &  1.0000 &       0 &  1.0000 &  1.0000 &  1.0000 &  1.0000\\  \hline
		\end{tabular}
	\label{tab:p170}
\end{table}

\begin{table}[h]
	\centering
	\caption{The simulated Powers of the Test of independence based on several test statistics against the correlated random-walk type normal alternative with different correlation coefficients $r$ and for sample size $n=100$ }
		\begin{tabular}{|c| c c c c c c c c c|} \hline
		r    &  $T_1$ & $T_2$ & $T_3$ &  $T_4$  &  $T_5$ & $T_6$ &  $Spearman's~~ \hat{\rho}$ & $Kendall's~~ t$ & $T_p$ \\ \hline
    -1.0000 &       0 &       0 &  1.0000 &  1.0000 &       0 &  1.0000 &  1.0000 &  1.0000 &  1.0000\\
   -0.9000 &  0.0169 &  0.8378 &  0.9454 &  0.9569 &  0.1806 &  0.9351 &  0.9912 &  0.9927 &  0.9922\\
   -0.8000 &  0.0191 &  0.8458 &  0.8571 &  0.8829 &  0.1823 &  0.8401 &  0.9596 &  0.9644 &  0.9634\\
   -0.7000 &  0.0303 &  0.7335 &  0.7780 &  0.8117 &  0.1581 &  0.7564 &  0.9200 &  0.9220 &  0.9252\\
   -0.6000 &  0.0338 &  0.5634 &  0.7101 &  0.7490 &  0.1474 &  0.6820 &  0.8698 &  0.8706 &  0.8799\\
   -0.5000 &  0.0429 &  0.3711 &  0.6550 &  0.7035 &  0.1328 &  0.6267 &  0.8400 &  0.8404 &  0.8507\\
   -0.4000 &  0.0558 &  0.2442 &  0.6198 &  0.6659 &  0.1315 &  0.5891 &  0.8076 &  0.8022 &  0.8178\\
   -0.3000 &  0.0668 &  0.1479 &  0.5913 &  0.6398 &  0.1264 &  0.5564 &  0.7705 &  0.7628 &  0.7836\\
   -0.2000 &  0.0798 &  0.0956 &  0.5714 &  0.6225 &  0.1214 &  0.5411 &  0.7554 &  0.7466 &  0.7711\\
   -0.1000 &  0.0988 &  0.0634 &  0.5531 &  0.6011 &  0.1142 &  0.5236 &  0.7478 &  0.7376 &  0.7646\\
         0 &  0.1258 &  0.0574 &  0.5558 &  0.6072 &  0.1142 &  0.5240 &  0.7446 &  0.7359 &  0.7642\\
    0.1000 &  0.1580 &  0.0643 &  0.5532 &  0.6006 &  0.1139 &  0.5206 &  0.7501 &  0.7401 &  0.7666\\
    0.2000 &  0.1915 &  0.0942 &  0.5692 &  0.6208 &  0.1169 &  0.5368 &  0.7622 &  0.7537 &  0.7763\\
    0.3000 &  0.2351 &  0.1525 &  0.5890 &  0.6370 &  0.1240 &  0.5575 &  0.7795 &  0.7717 &  0.7972\\
    0.4000 &  0.2894 &  0.2390 &  0.6150 &  0.6619 &  0.1237 &  0.5805 &  0.8107 &  0.8069 &  0.8220\\
    0.5000 &  0.3562 &  0.3796 &  0.6455 &  0.6915 &  0.1308 &  0.6157 &  0.8339 &  0.8358 &  0.8491\\
    0.6000 &  0.4394 &  0.5612 &  0.7064 &  0.7439 &  0.1418 &  0.6759 &  0.8787 &  0.8829 &  0.8894\\
    0.7000 &  0.5448 &  0.7437 &  0.7713 &  0.8041 &  0.1494 &  0.7475 &  0.9161 &  0.9208 &  0.9215\\
    0.8000 &  0.6734 &  0.8424 &  0.8548 &  0.8779 &  0.1683 &  0.8386 &  0.9585 &  0.9617 &  0.9610\\
    0.9000 &  0.8414 &  0.8394 &  0.9465 &  0.9577 &  0.1862 &  0.9383 &  0.9929 &  0.9937 &  0.9937\\
    1.0000 &  1.0000 &       0 &  1.0000 &  1.0000 &       0 &  1.0000 &  1.0000 &  1.0000 &  1.0000\\   \hline
		\end{tabular}
	\label{tab:p1100}
\end{table}

\begin{table}[h]
	\centering
	\caption{The simulated Powers of the Test of independence based on several test statistics against the correlated random-walk type normal alternative with different correlation coefficients $r$ and for sample size $n=150$ }
		\begin{tabular}{|c| c c c c c c c c c|} \hline
		r    &  $T_1$ & $T_2$ & $T_3$ &  $T_4$  &  $T_5$ & $T_6$ &  $Spearman's~~ \hat{\rho}$ & $Kendall's~~ t$ & $T_p$ \\ \hline
   -1.0000 &       0 &       0 &  1.0000 &  1.0000 &       0 &  1.0000 &  1.0000 &  1.0000 &  1.0000\\
   -0.9000 &  0.0178 &  0.9339 &  0.9679 &  0.9760 &  0.2784 &  0.9626 &  0.9954 &  0.9967 &  0.9963\\
   -0.8000 &  0.0223 &  0.9542 &  0.9083 &  0.9270 &  0.2393 &  0.8929 &  0.9662 &  0.9690 &  0.9675\\
   -0.7000 &  0.0270 &  0.9024 &  0.8519 &  0.8808 &  0.2151 &  0.8295 &  0.9339 &  0.9376 &  0.9374\\
   -0.6000 &  0.0404 &  0.7600 &  0.8027 &  0.8357 &  0.1886 &  0.7750 &  0.9037 &  0.9047 &  0.9060\\
   -0.5000 &  0.0447 &  0.5538 &  0.7509 &  0.7928 &  0.1738 &  0.7224 &  0.8639 &  0.8634 &  0.8713\\
   -0.4000 &  0.0549 &  0.3558 &  0.7235 &  0.7654 &  0.1633 &  0.6887 &  0.8390 &  0.8327 &  0.8463\\
   -0.3000 &  0.0692 &  0.2037 &  0.7052 &  0.7500 &  0.1612 &  0.6717 &  0.8223 &  0.8147 &  0.8359\\
   -0.2000 &  0.0835 &  0.1211 &  0.6811 &  0.7316 &  0.1473 &  0.6396 &  0.8051 &  0.8009 &  0.8150\\
   -0.1000 &  0.1050 &  0.0765 &  0.6737 &  0.7246 &  0.1460 &  0.6351 &  0.7943 &  0.7879 &  0.8110\\
         0 &  0.1284 &  0.0532 &  0.6721 &  0.7231 &  0.1439 &  0.6346 &  0.7973 &  0.7929 &  0.8112\\
    0.1000 &  0.1565 &  0.0796 &  0.6750 &  0.7261 &  0.1474 &  0.6382 &  0.7950 &  0.7875 &  0.8114\\
    0.2000 &  0.2022 &  0.1168 &  0.6802 &  0.7303 &  0.1452 &  0.6449 &  0.7989 &  0.7914 &  0.8189\\
    0.3000 &  0.2351 &  0.2148 &  0.6999 &  0.7474 &  0.1581 &  0.6648 &  0.8174 &  0.8098 &  0.8318\\
    0.4000 &  0.3005 &  0.3629 &  0.7219 &  0.7680 &  0.1635 &  0.6857 &  0.8366 &  0.8341 &  0.8479\\
    0.5000 &  0.3624 &  0.5572 &  0.7623 &  0.8035 &  0.1727 &  0.7287 &  0.8676 &  0.8649 &  0.8771\\
    0.6000 &  0.4466 &  0.7607 &  0.7991 &  0.8375 &  0.1918 &  0.7705 &  0.9008 &  0.9022 &  0.9033\\
    0.7000 &  0.5522 &  0.9051 &  0.8492 &  0.8773 &  0.2106 &  0.8283 &  0.9329 &  0.9353 &  0.9366\\
    0.8000 &  0.6884 &  0.9589 &  0.9084 &  0.9292 &  0.2487 &  0.8930 &  0.9677 &  0.9697 &  0.9689\\
    0.9000 &  0.8400 &  0.9337 &  0.9683 &  0.9752 &  0.2775 &  0.9601 &  0.9939 &  0.9947 &  0.9945\\
    1.0000 &  1.0000 &       0 &  1.0000 &  1.0000 &       0 &  1.0000 &  1.0000 &  1.0000 &  1.0000\\  \hline
		\end{tabular}
	\label{tab:p1150}
\end{table}

\begin{table}[h]
	\centering
	\caption{The simulated Powers of the Test of independence based on several test statistics against the correlated random-walk type normal alternative with different correlation coefficients $r$ and for sample size $n=200$ }
		\begin{tabular}{|c| c c c c c c c c c|} \hline
		r    &  $T_1$ & $T_2$ & $T_3$ &  $T_4$  &  $T_5$ & $T_6$ &  $Spearman's~~ \hat{\rho}$ & $Kendall's~~ t$ & $T_p$ \\ \hline
   -1.0000 &       0 &       0 &  1.0000 &  1.0000 &       0 &  1.0000 &  1.0000 &  1.0000 &  1.0000\\
   -0.9000 &  0.0174 &  0.9733 &  0.9794 &  0.9856 &  0.3935 &  0.9740 &  0.9967 &  0.9966 &  0.9961\\
   -0.8000 &  0.0233 &  0.9888 &  0.9320 &  0.9472 &  0.3169 &  0.9187 &  0.9719 &  0.9730 &  0.9735\\
   -0.7000 &  0.0283 &  0.9718 &  0.8915 &  0.9171 &  0.2754 &  0.8693 &  0.9440 &  0.9469 &  0.9475\\
   -0.6000 &  0.0337 &  0.8784 &  0.8493 &  0.8834 &  0.2468 &  0.8237 &  0.9157 &  0.9160 &  0.9219\\
   -0.5000 &  0.0448 &  0.6986 &  0.8159 &  0.8559 &  0.2170 &  0.7796 &  0.8847 &  0.8839 &  0.8915\\
   -0.4000 &  0.0539 &  0.4664 &  0.7868 &  0.8339 &  0.2056 &  0.7498 &  0.8629 &  0.8621 &  0.8725\\
   -0.3000 &  0.0681 &  0.2783 &  0.7707 &  0.8201 &  0.1917 &  0.7351 &  0.8467 &  0.8394 &  0.8560\\
   -0.2000 &  0.0765 &  0.1443 &  0.7509 &  0.7972 &  0.1782 &  0.7100 &  0.8299 &  0.8258 &  0.8418\\
   -0.1000 &  0.1049 &  0.0860 &  0.7521 &  0.7992 &  0.1754 &  0.7098 &  0.8268 &  0.8185 &  0.8388\\
         0 &  0.1310 &  0.0635 &  0.7415 &  0.7904 &  0.1732 &  0.7043 &  0.8246 &  0.8142 &  0.8352\\
    0.1000 &  0.1590 &  0.0820 &  0.7486 &  0.7979 &  0.1767 &  0.7114 &  0.8243 &  0.8137 &  0.8378\\
    0.2000 &  0.1964 &  0.1544 &  0.7544 &  0.7990 &  0.1779 &  0.7143 &  0.8336 &  0.8249 &  0.8473\\
    0.3000 &  0.2411 &  0.2651 &  0.7695 &  0.8132 &  0.1888 &  0.7355 &  0.8426 &  0.8356 &  0.8560\\
    0.4000 &  0.3001 &  0.4641 &  0.7893 &  0.8310 &  0.1969 &  0.7538 &  0.8628 &  0.8609 &  0.8716\\
    0.5000 &  0.3596 &  0.6934 &  0.8149 &  0.8571 &  0.2219 &  0.7837 &  0.8783 &  0.8767 &  0.8847\\
    0.6000 &  0.4576 &  0.8844 &  0.8512 &  0.8874 &  0.2403 &  0.8246 &  0.9162 &  0.9197 &  0.9222\\
    0.7000 &  0.5500 &  0.9701 &  0.8945 &  0.9180 &  0.2724 &  0.8721 &  0.9472 &  0.9496 &  0.9495\\
    0.8000 &  0.6953 &  0.9872 &  0.9353 &  0.9517 &  0.3164 &  0.9197 &  0.9728 &  0.9751 &  0.9743\\
    0.9000 &  0.8496 &  0.9749 &  0.9789 &  0.9849 &  0.3886 &  0.9743 &  0.9954 &  0.9959 &  0.9961\\
    1.0000 &  1.0000 &       0 &  1.0000 &  1.0000 &       0 &  1.0000 &  1.0000 &  1.0000 &  1.0000\\   \hline
		\end{tabular}
	\label{tab:p1200}
\end{table}

\begin{figure}[h]
\centering
\includegraphics[width=79mm, height=60mm] {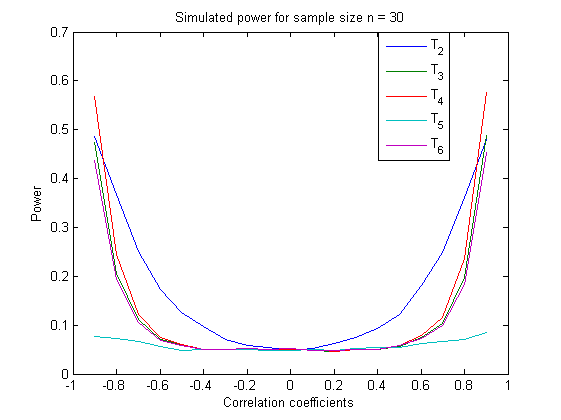}
\includegraphics[width=79mm, height=60mm] {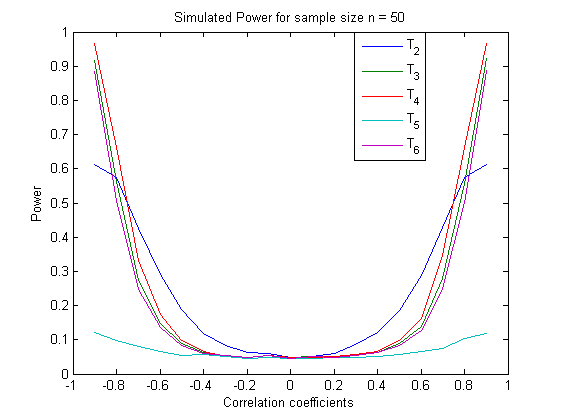}
\includegraphics[width=79mm, height=60mm] {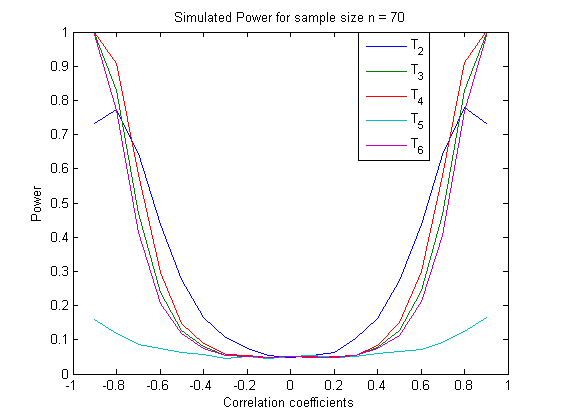}
\includegraphics[width=79mm, height=60mm] {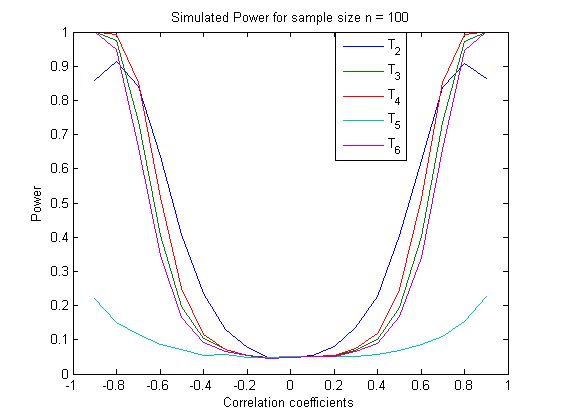}
\includegraphics[width=79mm, height=60mm] {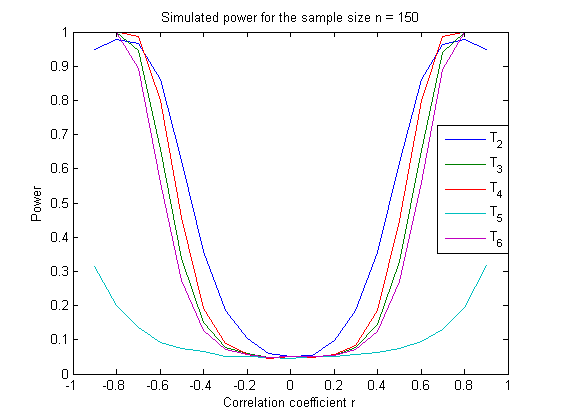}
\includegraphics[width=79mm, height=60mm] {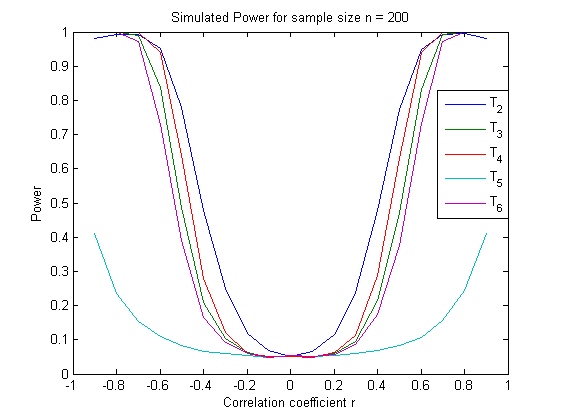}
\caption{Plot of the Simulated power aginst the correlated normal alternative for different tests for various sample sizes $n$}
\label{fig:p1}
\end{figure}

\begin{figure}[h]
\centering
\includegraphics[width=79mm, height=60mm] {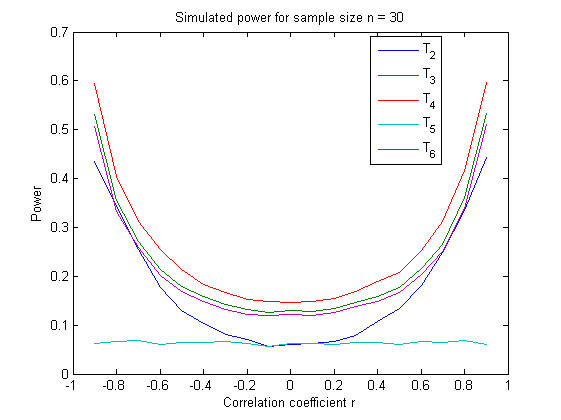}
\includegraphics[width=79mm, height=60mm] {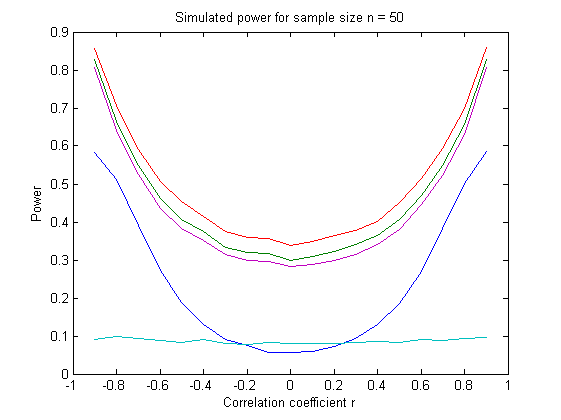}
\includegraphics[width=79mm, height=60mm] {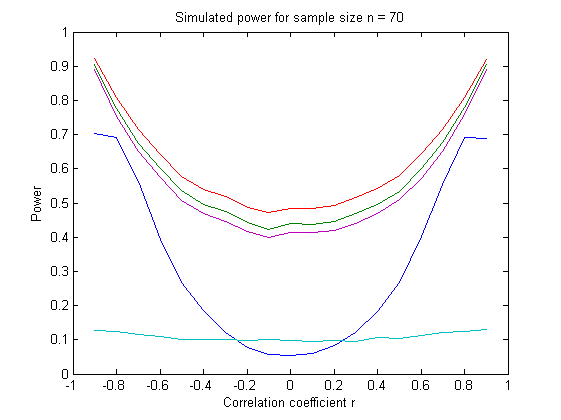}
\includegraphics[width=79mm, height=60mm] {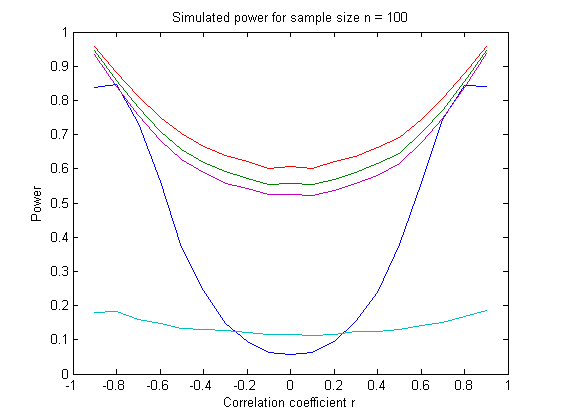}
\includegraphics[width=79mm, height=60mm] {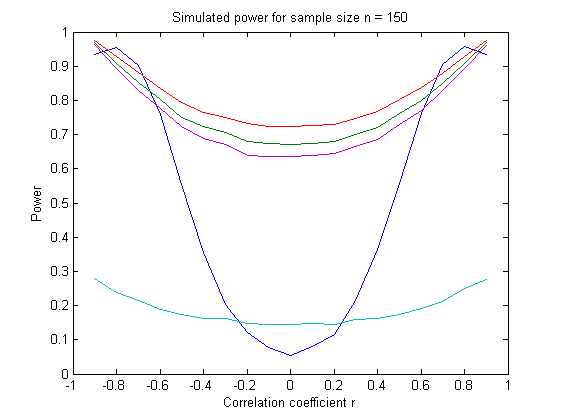}
\includegraphics[width=79mm, height=60mm] {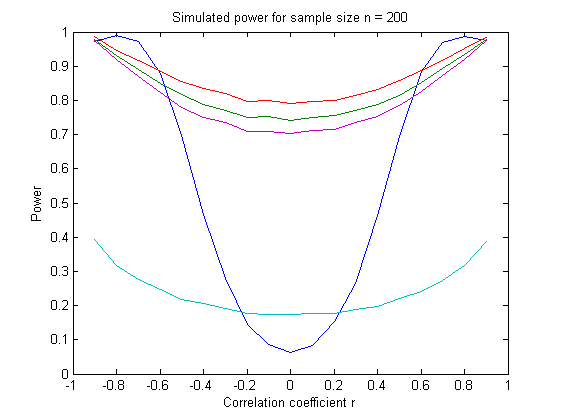}
\caption{Plot of the Simulated power aginst the the correlated random-walk type normal alternative for different tests for sample sizes $n= 30 $ and $n=50$}
\label{fig:p2}
\end{figure}

\end{document}